\definecolor{niceblue}{rgb}{.1, .25, .8}
\def\Tr{\mathrm{Tr}}
\def\det{\mathrm{det}}
\newtheorem{theorem}{Theorem}
\newcommand{\myleft}{\mathopen{}\mathclose\bgroup\left}
\newcommand{\myright}{\aftergroup\egroup\right}
\newcounter{example}[section]
\newcommand{\defeq}{\equiv}
\newcommand{\bs}{\backslash}
\newcommand{\GG}{\mathcal{G}}
\begin{document}

\title{Classifying fermionic states via many-body correlation measures}
\author{Mykola Semenyakin}
\affiliation{Perimeter Institute for Theoretical Physics, Waterloo, ON N2L 2Y5, Canada}
\affiliation{Instituut-Lorentz, Universiteit Leiden, P.O. Box 9506, 2300 RA Leiden, The Netherlands}
\author{Yevheniia Cheipesh}
\affiliation{Instituut-Lorentz, Universiteit Leiden, P.O. Box 9506, 2300 RA Leiden, The Netherlands}
\author{Yaroslav Herasymenko}
\email{yaroslav@cwi.nl}
\affiliation{QuSoft and CWI, Science Park 123, 1098 XG Amsterdam, The Netherlands}
\affiliation{QuTech, TU Delft, P.O. Box 5046, 2600 GA Delft, The Netherlands}
\affiliation{Delft Institute of Applied Mathematics, TU Delft, 2628 CD Delft, The Netherlands}

\begin{abstract}
Understanding the structure of quantum correlations in a many-body system is key to its computational treatment. For fermionic systems, correlations can be defined as deviations from Slater determinant states. The link between fermionic correlations and efficient computational physics methods is actively studied but remains ambiguous. 
We make progress in establishing this connection mathematically. In particular, we find a rigorous classification of states relative to $k$-fermion correlations, which admits a computational physics interpretation. 
Correlations are captured by a measure $\omega_k$, a function of $k$-fermion reduced density matrix that we call twisted purity.  
A condition $\omega_k=0$ for a given $k$ puts the state in a class $\GG_k$ of correlated states.
Sets $\GG_k$ are nested in $k$, and Slater determinants correspond to $k = 1$.
Classes $\GG_{k={\rm O}(1)}$ are shown to be physically relevant, as $\omega_k$ vanishes or nearly vanishes for truncated configuration-interaction states, perturbation series around Slater determinants, and some nonperturbative eigenstates of the 1D Hubbard model.  
For each $k = {\rm O}(1)$, we give an explicit ansatz with a polynomial number of parameters that covers all states in $\GG_k$. Potential applications of this ansatz and its connections to the coupled-cluster wavefunction are discussed.
\end{abstract}

\maketitle

\textit{Introduction.} 
Quantum correlations are central to many-body quantum problems, their computational treatment, and complexity. For fermionic systems with a fixed particle number, a natural definition of uncorrelated states are Slater determinants \cite{ecke02}. These states arise in noninteracting systems and admit efficient computations. 
To characterize the correlations, or deviations of a state from a Slater determinant, quantitative measures are employed.
Some measures were proposed in terms of one- and two-fermion reduced density matrices \cite{zies95, luzan07, grob94, davidson12, bravyi17}, minimal distance to the manifold of Slater states \cite{vedr98, gott05} and so-called Slater rank \cite{ecke02}. 
Fermionic magic \cite{cudby23, reardon23, dias23}, measuring the deviation of a quantum circuit from fermionic linear optics \cite{bravyi06, heben19, knill01, terhal02, bravyi05}, can also be considered a quantifier of correlations.
Correlation measures were applied to characterize physical systems, as well as to guide computational physics and chemistry methods \cite{zana02, zies95, bogu12, gu03, ding20, lege03,lee89, niel99, stair20, ferreira22}. 
Such methods are often based on approximate ansatzes, such as configuration-interaction states \cite{crem13, hofm03}, coupled-cluster ansatzes \cite{coes60,cizek66,kumm78,bish91,bart07}, Jastrow and Gutzwiller wavefunctions \cite{jast55, krot77,hell91}, tensor networks \cite{chan11, krum16, corb10, cira21}, and generalized Gaussian ansatzes \cite{bravyi17, shi18, hackl20}. It is a widespread heuristic, that a state with bounded correlations should admit a representation by a compact ansatz \cite{roth09,chan11, leht17, hait19}. The mathematical understanding of this connection, however, remains limited.

In this work, we give a classification of $k$-fermion correlations with a direct link to computational physics. The key new object is the twisted purity $\omega_k$, a correlation measure which is a function of the $k$-body reduced density matrix, invariant under single-particle rotations. Twisted purity is also a Hermitian observable on two copies of a state. If a state yields $\omega_k=0$, its amplitudes obey a generalization of so-called Plücker relations \cite{griff94}. We denote the set of such states $\GG_k$. Condition $\omega_{k_1}=0$ implies $\omega_{k_2}=0$ for all $k_2>k_1$, so the sets $\GG_k$ are nested in $k$ and define a classification of fermionic states ($\GG_1$ are Slater). We prove that each $\GG_{k=O(1)}$ is covered by a poly-sized ansatz. This ansatz is similar to the coupled-cluster wavefunction but has a different functional form. We partially uncover the physical meaning of classes $\GG_k$, finding that some (approximate) 1D Hubbard model eigenstates, general perturbative series around Slater determinants, and truncated configuration-interaction states belong to $\GG_k$. By implication, twisted purity diagnoses the reducibility of a state to configuration-interaction form.

Our results complement the categorization of states via entanglement scaling \cite{hastings07, li08, eise10, cira21}. On the one hand, akin to twisted purity, entanglement can be viewed as a correlation measure, and its classification underlies computational physics methods that employ tensor networks. On the other hand, a major qualitative difference is that we focus on correlations insensitive to single-particle basis rotations, while these, in general, produce volume-law entanglement \cite{bian22}. Furthermore, compared to the highly developed studies of bipartite entanglement, the type of correlations encoded in the $k$-fermion reduced density matrix is much less understood.

This paper is organized as follows. First, we introduce some notations and basic concepts used in the manuscript. We then define twisted purity $\omega_k$ and the classes $\GG_k$. Next, we examine various states in $\GG_k$ and analyze the meaning of these classes. In the last part of the text, we show that states in $\GG_k$ admit a generalized Wick's rule for the amplitudes. It implies the advertised representation by a compact non-Slater ansatz.

\textit{Notations.}
We focus on the fermionic Hilbert space $\mathcal{H}$ on $l$ modes with $n$ particles. 
Using a set of creation operators $\{\psi^\dag_r\,|\, r\in [l]\}$ and the Fock vacuum $\ket{\varnothing }$, we define in $\mathcal{H}$ a basis of states $\ket{S}=\Psi^{\dag}_{S}\ket{\varnothing }=\psi^{\dag}_{s_n}..\psi^{\dag}_{s_1}\ket{\varnothing }$.  
Here $S$ is an \textit{ordered} integer sequence $(s_1,..,s_n)\subset [l]$; unless specified otherwise, we use capital Latin letters for such sequences. Any state $\ket{v}\in\mathcal{H}$ can be decomposed into amplitudes $v(S)$ as $\ket{v}=\sum v(S)\ket{S}$. 
For convenience, we treat ordered sequences as sets and use operations such as $S_1\bs S_2$ and $S_1\cup S_3$ for $S_{1,2,3} \subset [l]$. Unions $\cup$ are always disjoint. We denote with $\sigma(S_1, S_2)$ the sign of the permutation which sorts a concatenation of $S_{1}$ and $S_2$. For more details on fermionic algebra and sequence manipulations, the reader may refer to Appendix\,\ref{app:ferm_seq}.

\textit{Plücker relations.} A general Slater determinant (or free-fermion) state $\ket{v}$ is given as
\begin{align}
\ket{v}=U\ket{S_0},~U=\exp\left(i \textstyle\sum (\theta_{pq} \psi^\dag_p \psi_q +\theta_{pq}^*\psi^\dag_q \psi_p )\right),
\label{eq:Slater_ob_rot}
\end{align}
for some fixed reference $\ket{S_0}=\Psi^{\dag}_{S_0}\ket{\varnothing }$ and complex numbers $\theta_{pq}$. Unitary $U$ in Eq.\,\ref{eq:Slater_ob_rot} is referred to as a single-particle transformation --- since it can be used to change the basis of single-particle fermionic modes $\{\psi_r\,|\, r\in[l]\}$.
Defining conditions for state $\ket{v}$ to be Slater can be phrased using the operator \cite{smirnov13}
\begin{align}
\Omega\defeq \textstyle\sum^l_{r=1} \psi_r \otimes \psi^\dag_r\label{eq:Plücker_op},
\end{align}
which acts on $\mathcal{H}\otimes \mathcal{H}$. These conditions, dubbed Plücker relations, are the components of equation \cite{miwa00}
\begin{align}
\Omega \ket{v}\otimes \ket{v}=0.
\label{eq:Plücker_rel}
\end{align}
Although Eqs.\,\ref{eq:Plücker_op}-\ref{eq:Plücker_rel} are given in a particular basis of fermionic modes $\{\psi_r \}$, $\Omega$ does not depend on such a basis (is \textit{single-particle invariant}). Namely, $[\Omega, U\otimes U]=0$ for any single-particle transformation $U$.
This fact is at the core of Plücker relations being a necessary and sufficient indicator of a Slater state.

The Plücker relations are usually formulated as algebraic equations in amplitudes \cite{griff94, levay05, levay08}, and the phrasing in terms of $\Omega$ \cite{miwa00, smirnov13} is less traditional. Let us demonstrate their equivalence for the simplest case, $(l,n)=(4,2)$. Projecting Eq.\,\ref{eq:Plücker_rel} onto $\bra{1}\otimes\bra{2,3,4}$ yields 
\begin{align}
v(1,2)v(3,4)-v(1,3)v(2,4)+v(1,4)v(2,3)=0,
\label{eq:4_2_Plücker}
\end{align}
which in this case is the only independent relation coming from Equation \ref{eq:Plücker_rel}. Equation \ref{eq:4_2_Plücker} is indeed the standard Plücker relation for $2$ fermions on $4$ modes \cite{levay05, schliemann01}.

Plücker relations can be connected to a scalar correlation measure. 
Equation \,\ref{eq:Plücker_rel} amounts to the vanishing of purity $\omega_1$ of the one-body reduced density matrix $\rho_1$,
\begin{align}
    0&=|\Omega \ket{v}\otimes \ket{v}|^2 =\rm{Tr}(\rho_1-\rho_1^2)\equiv\omega_1, \label{eq:scalar_Plücker}\\
    \rho_1^{pq}&\equiv \bra{v}\psi^\dag_q\psi_p\ket{v},
\end{align}
which is another known criterion for $\ket{v}$ to be Slater \cite{zies95, plastino09}. We will momentarily generalize these ideas to correlated (non-Slater) states.

\begin{figure*}[t]
\includegraphics[width=\linewidth]{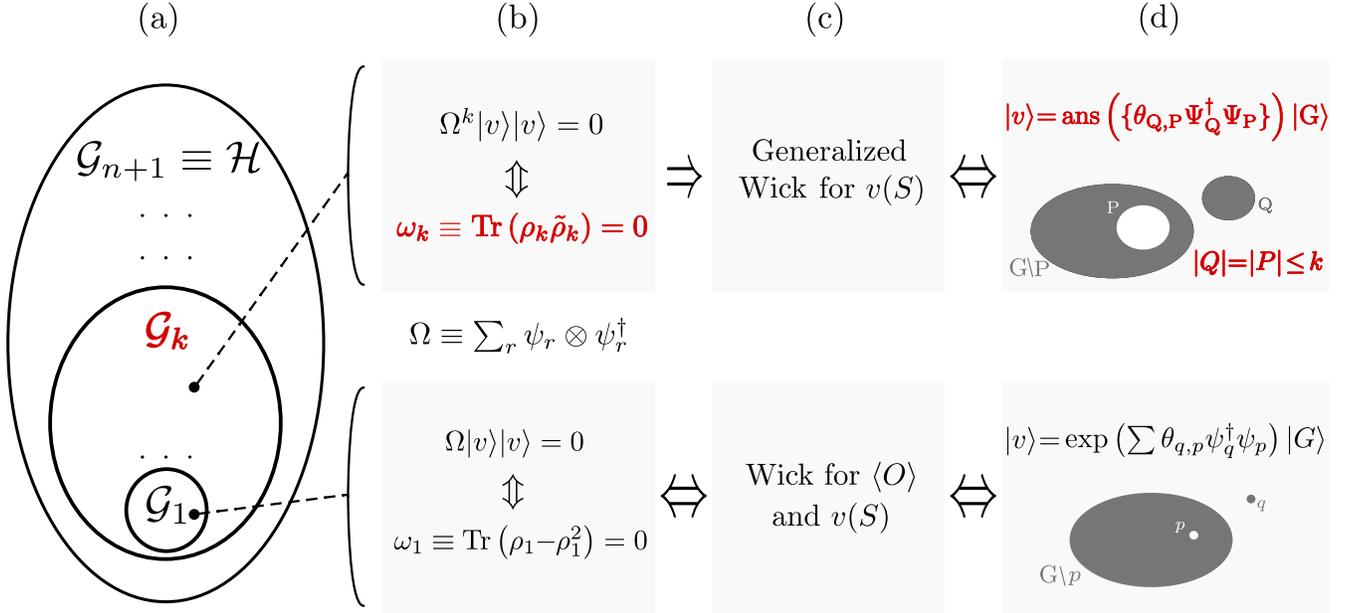}
\caption{\label{fig:vis_abs} Mathematical structure of our results. 
Highlighted in red are the main contributions: the twisted purity $\omega_k$, the set of correlated states $\GG_k$ defined by $\omega_k=0$, and the ansatz for states $\ket{v}\in\GG_k$. (a) The nested pattern of sets $\GG_k$. States in $\GG_1$ are the familiar Slater determinants. The entire Hilbert space of $n$ fermions on $l$ modes coincides with $\GG_{k=n+1}$. (b) $\omega_k$ generalizes the 1-RDM purity $\omega_1$ to the $k$-body case via `twisted' $k$-RDM $\tilde{\rho}_k$ (see Eq.\,\ref{eq:conj_rho_def}). Vanishing of twisted purity is equivalent to a generalization of Plücker relations $\Omega\ket{v}\ket{v}=0$. (c) As a key technical step, we find that states in $\GG_k$ obey a generalization of Wick's rule (Theorem\,\ref{thm:thm_corr_decomposition}) for the amplitudes $v(S)$, although not for observables $\langle O \rangle$. Unlike in the Slater case, there may be states outside $\GG_k$ which follow this generalized Wick's rule --- hence the `$\Rightarrow$' sign. (d) Generalized Wick's rule is equivalent to the ansatz representation for state $\ket{v}$. 
The explicit form of the ansatz is written in Eqs.\,\ref{eq:hyperferm_ansatz_maintext}-\ref{eq:ansatz_func_maintext}. The diagram displays how sets of modes $Q$ and $P$ relate to the set of modes $G$ occupied in the reference Slater state. The polynomial number of parameters $\theta_{P,Q}$ follows from the condition $|Q|=|P|\leq k$. For the Slater states $\ket{v}\in\GG_1$, the ansatz reduces to a known parameterization using an exponential of a weight-$2$ operator.
}
\end{figure*}

\textit{Generalized Plücker relations and twisted purity.}
In the non-Slater part of the Hilbert space, states can contain differing degrees of correlation.
This work is centered around states $\ket{v}$ with limited correlations, as defined by our generalization of Plücker relations: 
\begin{align}
    \Omega^{k} \ket{v}\otimes \ket{v}=0,
    \label{eq:k_Plücker}
\end{align}
where $k$ is a positive integer.
We denote $\GG_k\subset \mathcal{H}$ as the set of states satisfying Eq.\,\ref{eq:k_Plücker} --- in particular, $\GG_{k=1}$ are Slater (cf. Eq.\,\ref{eq:Plücker_rel}). From the structure of Eq.\,\ref{eq:k_Plücker}, we have $\GG_{k_1}\subset \GG_{k_2}$ if $k_1<k_2$, inducing a nested pattern of $\GG_{k}$ (Fig.\,\ref{fig:vis_abs}). Because $\Omega^{n+1}\ket{v}\otimes \ket{v}$ vanishes for any $n$-fermion state, $\GG_{k=n+1}\equiv \mathcal{H}$.

We now introduce twisted purity $\omega_k$, defined as
\begin{align}
    \label{eq:omega_def}
    \omega_k\left(\ket{v}\right)&\defeq\left|\frac{\Omega^{k} \ket{v}\otimes \ket{v}}{k!} \right|^2 ,
\end{align}
Equation \ref{eq:k_Plücker} is manifestly equivalent to $\omega_k$ being zero. The twisted purity is single-particle invariant and can be expressed as
\begin{align}
    \omega_k=\rm{Tr}[\rho_{k} \tilde{\rho}_{k}],
    \label{eq:omega_rho_def}
\end{align}
where $\rho_k$ is an order $k$-body reduced density matrix ($k$-RDM) with matrix elements
\begin{align}
    \rho_k^{Q,P}=\bra{v}\Psi^\dag_P\Psi^{\phantom{\dagger}}_Q\ket{v}
\end{align}
for sets $P$ and $Q$ ($|P|=|Q|=k$) and $\tilde{\rho}_k$ its twisted version, which we define as
\begin{align}
\label{eq:conj_rho_def}
    \tilde{\rho}_k^{Q,P}= \bra{v}\Psi^{\phantom{\dagger}}_Q\Psi^\dag_P\ket{v}.
\end{align}
The term `twisted' is a reference to `twist product' \cite{haah16},
implying a flipped multiplication order of $\Psi^{\phantom{\dagger}}_Q$ and $\Psi^\dag_P$ compared to $\rho_k^{Q,P}$. Twisted $k$-RDM $\tilde{\rho}_k$ can also be viewed as a particle-hole dual of ${\rho}_k$.
Furthermore, $\tilde{\rho}_k$ can be expressed in ordinary $k'$-RDMs for $k'\leq k$ by commuting $\Psi^{{\dagger}}_Q$ through $\Psi^{\phantom{\dag}}_P$ (see Appendix~\ref{app:twisted_RDM}). Because such $k'$-RDMs are marginals of the $k$-RDM, we have that $\tilde{\rho}_k$ and $\omega_k$ are functions of $\rho_k$. 

The twisted purity $\omega_k$ generalizes the single-body purity $\omega_1$ (Eq.\,\ref{eq:scalar_Plücker}) to $k\in \mathbb{N}$. Note that $\omega_k$ is different from the reduced purity $\rm{Tr}[\rho_{k}^2]$ of Ref.\,\cite{fran13}.
For example, $\rm{Tr}[\rho_{k}^2]$ reaching its maximal value is a criterion for Slater states \cite{majtey2016multipartite}, while vanishing of $\omega_k$ defines $\GG_k$, which is a broader class. Twisted purity is also distinct from k-RDM cumulants \cite{kutzelnigg99, hanauer12}. The cumulants enjoy the property of additive separability, which twisted purities are lacking. On the other hand, our main result – classes $\mathcal{G}_k$ and their connection to computational physics – do not have a known analogue for k-RDM cumulants.

Twisted purities are in principle observable experimentally if two copies of the studied state $\ket{v}$ can be produced at will. Indeed, $\omega_k$ is an expectation value of a $4k$-body Hermitian operator $(\Omega^\dag)^k\Omega^k$ on a state $\ket{v}\otimes \ket{v}$ (cf. Eq.\,\ref{eq:omega_def}).
Extracting $\rho_k$ by tomography provides another route to obtaining $\omega_k$ from an experiment.

\textit{Meaning of $\GG_k$ classification.}
To understand the type of correlations captured by twisted purity $\omega_k$, we study various examples of states in $\GG_k$. A broad class of examples is given by states $\ket{v}$ which obey the condition
\begin{align}
\forall S_1, S_2,~v(S_{1,2})\neq 0:~~\frac{1}{2}|S_1\triangle S_2|< k,
\label{eq:slater_diameter}
\end{align}
where $\triangle$ means symmetric difference.
In other words, occupation numbers in basis states $\ket{S}$ in support of such $\ket{v}$ are close to each other in Hamming distance.
Any $\ket{v}$ satisfying Eq.\,\ref{eq:slater_diameter} belongs to $\GG_k$,
\begin{align}
    \Omega^k &\ket{v}\otimes \ket{v}=\sum_{R\subset [l], |R|=k}\Psi_R\otimes\Psi^{\dag}_R \ket{v}\otimes \ket{v}\notag\\
    &=\sum_{\substack{S_1,S_2\subset [l]\\
    R\subset S_1 \bs S_2, \\ |R|=k}} v(S_1) v(S_2) \Psi_R\ket{S_1} \Psi^{\dag}_R\ket{S_2}=0.
    \label{eq:H^R_in_H_k}
\end{align}
The condition $R\subset S_1 \bs S_2$ appeared in the sum due to $\Psi_R\ket{S_1}$ ($\Psi^\dag_R\ket{S_2}$) vanishing unless $R\subset S_1$ ($R\cap S_2=\varnothing $). For such $S_{1,2}$ that $v(S_{1,2})\neq 0$ we have $|S_1 \bs S_2|=\frac{1}{2}|S_1\triangle S_2|<k$. 
Therefore, the sum over $R\subset S_1 \bs S_2$ for $|R|=k$ only contains zero elements, yielding Eq.\,\ref{eq:H^R_in_H_k}.

Information-theoretically, the condition in Eq.\,\ref{eq:slater_diameter} captures the locality of correlations, discriminating between Bell-like correlated states and GHZ-like. For $(l,n)=(8,4)$, state $\ket{v_1}=\frac{1}{\sqrt{2}}(\ket{1,2,3,4}+\ket{1,2,5,6})$ obeys Eq.\,\ref{eq:slater_diameter} for $k=3$, while $\ket{v_2}=\frac{1}{\sqrt{2}}(\ket{1,2,3,4}+\ket{5,6,7,8})$ only for $k=5$. Moreover, one can prove that the state $\ket{v_2}$ will not obey Eq.\,\ref{eq:slater_diameter} for $k=3$ in \textit{any} single-particle mode basis. This follows from $\ket{v_2}\notin\GG_3$, which can be checked by a direct computation: $\omega_3(\ket{v_2})\neq 0$. From the single-particle invariance of $\GG_k$, if $\ket{v}$ obeys Eq.\,\ref{eq:slater_diameter} in at least some single-particle basis, then $\ket{v}\in\GG_k$.

Counting the degrees of freedom remaining under Eq.\,\ref{eq:slater_diameter} for a given $k$ is a difficult task. Let us instead bound it from below. Consider a subset of such states, obeying a condition relative to a fixed basis state $\ket{S_0}$:
\begin{align}
\forall S, ~v(S)\neq 0: |S\triangle S_0|< k.
\label{eq:slater_radius}
\end{align}
Eq.\,\ref{eq:slater_radius} implies Eq.\,\ref{eq:slater_diameter}: from $|S_1\triangle S_0|< k$ and $|S_2\triangle S_0|< k$ one recovers $\frac{1}{2}|S_1\triangle S_2|< k$. States obeying Eq.\,\ref{eq:slater_radius} form a linear subspace of $\mathcal{H}$ of poly-sized dimension
\begin{align}
    \sum^{\lfloor{k/2}\rfloor}_{r=0}\begin{pmatrix}
    n \\
    r
\end{pmatrix}\begin{pmatrix}
    l-n \\
    r
\end{pmatrix}\sim n^{\lfloor{k/2}\rfloor}(l-n)^{\lfloor{k/2}\rfloor}
\label{eq:dimension_counting}
\end{align}
for $n,l\gg k$. This lower bounds the number of degrees of freedom in the set defined by Eq.\,\ref{eq:slater_diameter}, and therefore also in $\GG_k$. Later we upper bound the size of $\GG_k$ by counting the degrees of freedom in our non-Slater ansatz.

States which obey Eq.\,\ref{eq:slater_diameter} (and thus belong to $\GG_k$) naturally arise in perturbation theory truncated at finite order. Consider the Dyson series for a ground state $\ket{v}$ of $H=H_0+\textstyle\sum V$. Here $H_0$ is free-fermionic with a unique ground state $\ket{S_0}$, and $V$ are $4$-fermion Coulomb interactions. The perturbation theory is converging if $H_0$ is gapped (e.g., for a band insulator) and $V$ is sufficiently weak.
The $r$th order truncation $\ket{v}^{(r)}$ of $\ket{v}$ consists of terms proportional to $V^{r'}\ket{S_0}$ for $r'\leq r$. Since $V^{r}$ has up-to-$4r$-fermion terms, basis states $\ket{S}$ contributing to $\ket{v}^{(r)}$ differ from $\ket{S_0}$ in at most $|S\triangle S_0|=4r$. This implies Eq.\,\ref{eq:slater_radius} and therefore also Eq.\,\ref{eq:slater_diameter} with $k=4r+1$. 

More broadly than truncated perturbation series, Eq.\,\ref{eq:slater_radius} defines precisely the family of states known in computational quantum chemistry as configuration-interaction (CI) states \cite{crem13, hofm03}. For instance, $k=5$ corresponds to the CISD family, where SD stands for `singles and doubles'.  
From a quantum chemistry standpoint, our classes $\GG_k$ can be viewed as a single-particle invariant (orbital-invariant) generalization of CI truncations. 
Conversely, nonzero $\omega_k$ signals the irreducibility of a state to truncated configuration-interaction form.

\begin{figure}[t]
\includegraphics[width=\linewidth]{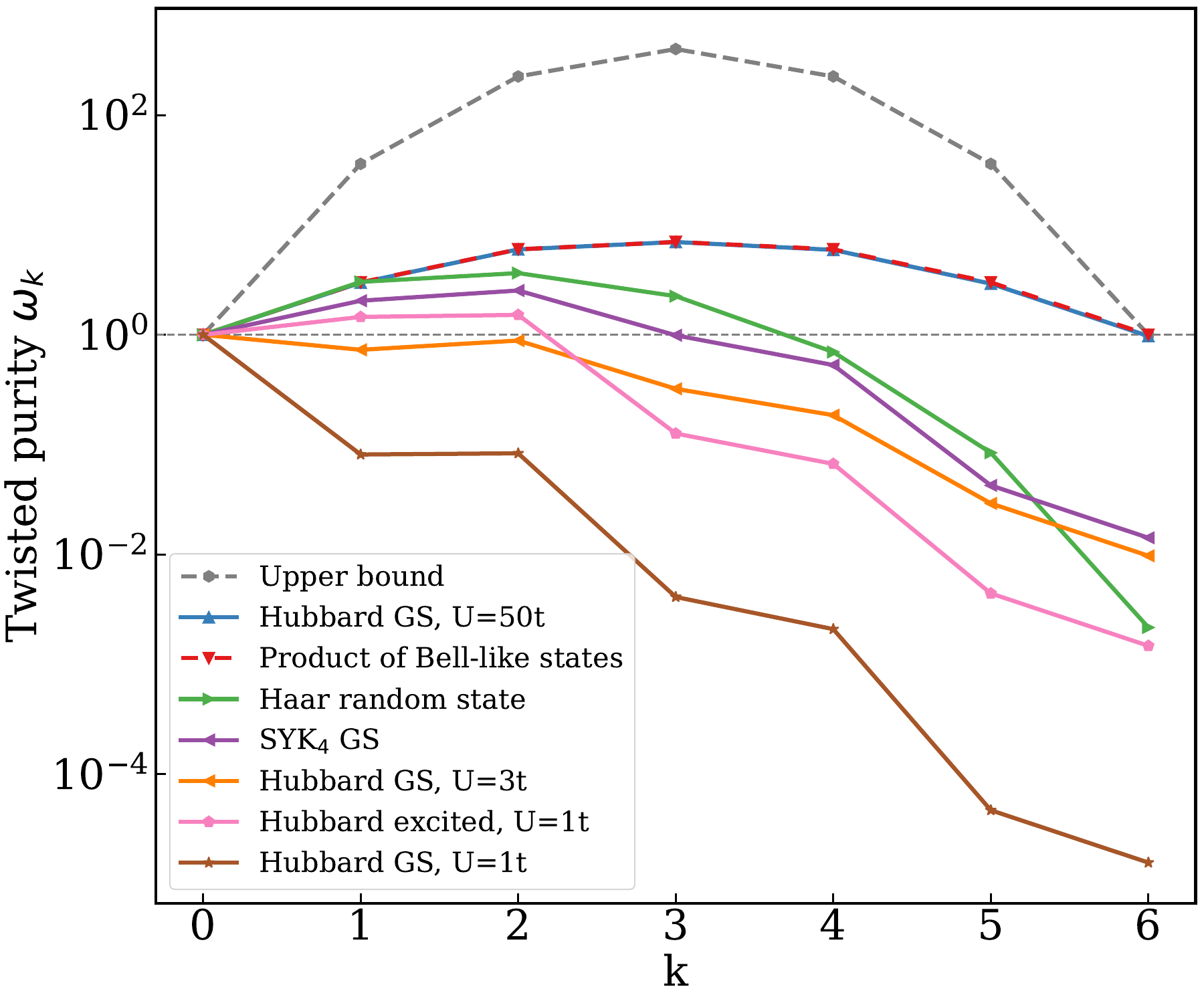}
\caption{\label{fig:numerics} 
Twisted purity $\omega_k$ for various $n=6$-fermionic states on $l=12$ modes; note the log scale. It is natural to include $\omega_0$ (equal to $1$ by Eq.\,\ref{eq:omega_def}) into the picture. For the Hubbard model eigenstates, the change of $\omega_k$ over $k$ depends on the relative coupling strength $U/t$ (cf. Eq.\,\ref{eq:Hubbard_model_def}) and excitation energy. For the ground state as a function of $U$, the state changes from being perturbatively close to $\GG_1$ ($U=t$) to being $\GG_3$-like ($U=3t$) 
to being far from any $\GG_k$ ($U=50t$). For $U=50t$, the curve is remarkably similar to the one of a product of Bell-like states. The pink curve shows an example excited state of the Hubbard model (of energy $\epsilon$ s.t. $(\epsilon - \epsilon_{\mathtt{min}})/(\epsilon_{\mathtt{max}} - \epsilon_{\mathtt{min}})) \sim 0.23$), with exponential decay starting from $k=3$. The ground state of a typical SYK model realization is highly correlated but with small $\omega_{k\geq 5}$, similarly to a Haar random state. Note the even-odd fluctutation in multiple plots. These come from the structure of $\omega_k$ itself; we hypothesize that Eq.\,\ref{eq:k_Plücker} for pairs $k=2r-1$ and $k=2r$ is in fact equivalent \cite{ftnt_oddeven}.} 
\end{figure}

To probe how $\omega_k$ manifests in strongly interacting physics, we studied numerically the twisted purities for eigenstates of 1D Hubbard model \cite{hubb64} at half-filling, as well as the complex SYK model \cite{kita15,sach93}. More information on these models and details of numerical simulations is given in Appendix~\ref{app:numerics_physics}; we published the code used to produce the data in the GitHub repository \cite{plucker_github}. The main results for $l=12$, $n=6$ are shown in Fig.\,\ref{fig:numerics}. For comparison, we include the twisted purities of two model states. These are the Haar-random state (here we plot the average $\omega_k$)
and the product of three Bell-like states $\ket{\varphi} = \tfrac{1}{\sqrt{2}}(\ket{1,2} + \ket{3,4})$. 
The twisted purities for these two cases can be computed analytically, see details in Appendices \ref{app:bell_therm_purity} and \ref{app:Haar_purity}. Also, we plot a basic upper bound on the values of $\omega_k$, derived in Appendix~\ref{app:bound} using Cauchy-Schwarz inequality and properties of $\rho_k$, $\tilde{\rho}_k$.

Our numerics suggests the relevance of $\GG_k$ classes for the Hubbard model eigenstates: we discover various eigenstates with small values of $\omega_k$ for $k\geq 1$, $3$, or $5$ (see Fig.\,\ref{fig:numerics}). States where $\omega_k$ exponentially decays after $k=1$ we interpret as perturbative. 
The more interesting states are those with non-monotonous twisted purity, e.g., when finite $\omega_{1,2}>1$ is followed by an exponential decay in $k$. Such states are qualitatively similar to $\GG_3$, but also non-perturbatively correlated (far from $\GG_1$). 
The ground state of the Hubbard model at a large coupling constant is far from any $\GG_k$. At the same time, we find that its twisted purities are in a perfect match with those of a product of Bell-like states. This suggests that this ground state might secretly have a product structure upon a global single-particle rotation. 

In our analysis, the smallness of twisted purity can be viewed as a simpler-to-compute substitute for the geometric proximity to $\GG_k$; however, we were unable to show an explicit bound on the distance to $\GG_k$ in terms of the magnitude of $\omega_k$. Such a bound can be proven for $k=1$ (for example, it follows from Lemma 2 in Ref.\,\cite{bittel2024optimal}), but this does not directly generalize to $k>1$.
Further characterization of states in and near $\GG_k$, especially those not falling under the condition of Eq.\,\ref{eq:slater_diameter}, is an interesting open question.

\textit{Extended Wick's rule and $\GG_k$ parameterization.}
To show that states $\ket{v}\in\GG_k$ admit an explicit non-Slater ansatz, the crucial step is showing a version of Wick's rule \cite{wick50, isserlis16} for amplitudes of $\ket{v}$. It is a generalization of Wick's rule for amplitudes that holds for Slater states $\ket{v}\in \GG_1$. To spell out both the Slater and non-Slater versions, organize basis states $\ket{S}$ as `excitations' with respect to a reference $\ket{G}$, namely $\ket{S}=\ket{G\cup Q \bs P }$ for sets $P\subset G$ and $Q\subset ([l]\backslash G)$, $|P|=|Q|$ \cite{ftnt_cup_bs}.
In a \textit{Slater} state $\ket{v}$, the amplitudes of `multi-particle excitations' (i.e., $|P|=|Q|>1$)  partition into single-particle excitations (assuming $v(G)\neq 0$),
\begin{align}
\frac{v(G\cup Q \bs P )}{v(G)}=(-1)^{|P|(|P|-1)/2}\det \left[\frac{v(G\cup q \bs p )}{v(G)}\right],
\label{eq:1_amplitude_Wick}
\end{align}
where the bracketed expression is a $|P|$-dimensional matrix over single modes $p\in P$ and $q\in Q$. Equation \ref{eq:1_amplitude_Wick} means that the state $\ket{v}$ is fully determined by its one-excitation amplitudes $v(G\cup q \bs p )$. This parallels the usual Wick's rule, which says that all observables on a Slater state are determined by single-particle correlators. Equation~\ref{eq:1_amplitude_Wick} can be derived from Wick's rule of
Ref.\,\cite{alexandrov13} (see Theorem 2.7; amplitude $v(G\cup Q \bs P )$ is proportional to $\bra{G}\Psi^{\phantom{\dag}}_Q \Psi^{\dag}_P\ket{v}$, and in particular, $v(G\cup q \bs p )$ are $2$-point correlators). Furthermore, Eq.\,\ref{eq:1_amplitude_Wick} is a special case of the extended Wick's rule that we lay out below.

To show an extension of Eq.\,\ref{eq:1_amplitude_Wick} to states in $\GG_k$, we examine the components of Eq.\,\ref{eq:k_Plücker} of type $\bra{A}\otimes \bra{B}\Omega^k\ket{v}\otimes\ket{v}$ for sets $A$ and $B$. These take the form:
\begin{align}
\notag
0=&\textstyle\sum_{R\subset(B\backslash A), |R|=k} \,v(A\cup R)\,v(B\bs R) \\
 &\times \sigma(A, R)\,\sigma(B\bs R, R).\label{eq:hi_Plücker_components}
\end{align}
Combining Eq.\,\ref{eq:hi_Plücker_components} with appropriately chosen $A$ and $B$, one arrives at the following relation, which holds for $|P|=|Q|>k$ (see Theorem\,\ref{thm:gen_wick_iter} in Appendix\,\ref{app:gen_wick_iter})

\begin{align}
\label{eq:wick_iter_maintext}
\frac{v(G\cup Q \bs P )}{v(G)}
&=
\sum_{\substack{P'\subsetneq P,~ Q'\subsetneq Q}}
\frac{v(G\cup Q' \bs P' )}{v(G)}\frac{v(G\cup \bar{Q}' \backslash \bar{P}')}{v(G)}~
\notag\\&\times 
(-1)^{|\bar{P}'|+1}\,\frac{|P'|}{|P|}\,\sigma(\bar{P}',P')\,\sigma(Q', \bar{Q}'),
\end{align}
where $\bar{P}'\defeq P\backslash P'$ and $\bar{Q}'\defeq Q\backslash Q'$. One observes that a multi-excitation amplitude $v(G\cup Q \bs P )$ is broken up into fewer-excitation amplitudes, as under the sum we have $|P'|,|\bar{P}'|<|P|$ and $|Q'|,|\bar{Q}'|<|Q|$.

For all terms on the right hand side of Eq.\,\ref{eq:wick_iter_maintext}, the factors $\frac{v(G\cup Q' \bs P' )}{v(G)}$ with $|P'|=|Q'|>k$ can be further decomposed using Eq.\,\ref{eq:wick_iter_maintext} again. This process can be continued iteratively, before one breaks $\frac{v(G\cup Q \bs P )}{v(G)}$ into $\frac{v(G\cup Q' \bs P' )}{v(G)}$ with $|P'|=|Q'|\leq k$ alone. 
The result is an extended version of the amplitude Wick's rule for Slater states $\ket{v}\in \GG_{k=1}$ (Eq.\,\ref{eq:1_amplitude_Wick}). In particular, it implies that ${v(G\cup Q' \bs P' )}$ for $|P'|=|Q'|\leq k$ fully characterize any $\ket{v}$ in the $\GG_k$ class! The explicit formula for the extended Wick's rule is complicated; we spell it out in Appendix~\ref{app:cumulant_exp} (see Theorem\,\ref{thm:thm_corr_decomposition}). For a reminiscent extension of Wick's theorem in the context of matrix product states, see Ref.~\cite{hubener13}.

Although the extended Wick's rule is a condition on an exponential number of amplitudes, we encode it entirely in a polynomially-sized ansatz for the whole state $\ket{v}$. This can be achieved via careful bookkeeping with the use of the generating function method (see Theorem~\ref{thm:ansatz} in Appendix~\ref{app:ansatz}). The result is that relative to a basis state $\ket{G}$ any state $\ket{v}\in \GG_k$ with $v(G)\neq 0$ takes the form 
\begin{align}
\ket{v}=v(G)\, F(T_1,..T_k)\ket{G},
\label{eq:hyperferm_ansatz_maintext}
\end{align}
for commuting nilpotent operators
\begin{align}
T_{k'}=\sum_{\substack{P\subset G, Q\subset [l]\bs{G},\\ |P|=|Q|=k'}}  \theta_{P,Q} \Psi^\dag_{Q} \Psi_{P},
\label{eq:Tk}
\end{align}
and the function 
\begin{align}
    F(x_1,\,&x_2,...) = \sqrt{1+ 2(x_2+x_4+..)}\notag\\   
    &\times \exp \left( \int\limits_0^{1} \dfrac{x_1+3x_3\mu^2+5x_5 \mu^4+..}{1+ 2(x_2 \mu^2 +x_4 \mu^4 +..)} d\mu \right).
    \label{eq:ansatz_func_maintext}
\end{align}
Complex numbers $\theta_{P,Q}$ quantify the violations of Eq.\,\ref{eq:wick_iter_maintext} for $|P|=|Q|\leq k$ (see Appendix~\ref{app:ansatz}). 
On the other hand, Eq.\,\ref{eq:hyperferm_ansatz_maintext} can be considered as an ansatz for the state $\ket{v}\in\GG_k$, in which case $\theta_{P,Q}$ are its unknown free parameters and amplitude $v(G)$ is coming from normalization. This ansatz has a polynomial size in the sense that the number of parameters is polynomial, growing asymptotically as $O(n^{k} (l-n)^k)$ (via counting similar to Eq.\,\ref{eq:dimension_counting}). As Eq.\,\ref{eq:hyperferm_ansatz_maintext} entirely covers $\GG_k$, this scaling upper bounds the number of degrees of freedom in $\GG_k$ itself.

\textit{Discussion.} 
An interesting open direction is to apply the ansatz of Eq.\,\ref{eq:hyperferm_ansatz_maintext} in a practical computation, such as the search for a ground state energy of a given model. 
One may employ the structure of Eq.\,\ref{eq:hyperferm_ansatz_maintext}, which is similar to the coupled-cluster ansatz --- except the latter uses $F(T_1,..T_k)=\exp(T_1+..+T_k)$. Because of this similarity, the numerical methods for coupled-cluster which don't rely on the exponential form of $F(T_1,..T_k)$, can be used with our ansatz (e.g., see Refs.\,\cite{degr16, thom10, spen16}). Another intriguing question is whether the structure of $k$-RDMs identified in this work can be used in the context of $N$-representability problem \cite{coleman63, klyachko06,liu07,mazziotti12}.

Using the formalism developed here, one may study the structure of states coming from a system with weak and \textit{sparse} interactions. For such states, one expects that for any $k$, all but a few components in $\Omega^k \ket{v}\otimes\ket{v}$ are extremely small. One may define a class of states where these small components are set to vanish exactly. For states in this class, the ansatz of the type given in Eq.\,\ref{eq:hyperferm_ansatz_maintext} may be compact due to many vanishing parameters $\theta_{P,Q}$. Unlike $\GG_{k}$, such a class would depend on the choice of the single-particle basis. Pinning down this structure is an interesting research direction.

An object worth further study is the generating function of twisted purities, $Z(\beta)=\sum_k \omega_k\beta^{2k}$. In this work, we found $Z(\beta)$ to be a useful analytical tool, owing to its multiplicativity for products of independent subsystems (see Appendices~\ref{app:bell_therm_purity},~\ref{app:Haar_purity}). However, the structure of this generating function could also have useful interpretations. For instance, its factorization gives a necessary condition that a state is a single-particle transformed product of small subsystems. Developing such criteria further could be viewed as a `size-consistent' \cite{popl76} extension of the present work. 

Other open research directions include extending our formalism to the states which are mixed or whose particle number is not fixed. Here, it may prove useful that operator $\Lambda$ of Refs.\,\cite{bravyi05, melo13} equals $\Omega+\Omega^\dag$ up to normalization. Finally, by analogy with the entanglement theory, a worthwhile goal is to investigate if twisted purities define a monotone under the free-fermionic operations, including measurements.

\textit{Acknowledgments.} We have benefited from discussions with X. Bonet-Monroig, V. Cheianov, P. Emonts, L. Ding, D. DiVincenzo, P. Gavrylenko, D. Gosset, J. Helsen, A. Izmaylov, J. Liebert, A. Lopez, J. Minar, T. Mori, T. O'Brien, S. Polla, A. Tikku, J. Zaanen, and Y. Zhang. The authors thank Barbara Terhal for giving detailed feedback on the manuscript. Y.H. acknowledges the hospitality of Perimeter Institute and that of the Bochenkova family, which provided conducive environments for focused work away from his home institutions. Research of M.S. at Perimeter Institute is supported in part by the Government of Canada through the Department of Innovation, Science and Economic Development and by the Province of Ontario through the Ministry of Colleges and Universities. Y.C. acknowledges supported by the funding from the Dutch Research Council (NWO) and the European Research Council (ERC) under the European Union’s Horizon 2020 research and innovation program. Y.H. is supported by QuTech NWO funding 2020-2024 – Part I “Fundamental Research”, project number 601.QT.001-1, financed by the Dutch Research Council (NWO), and the Quantum Software Consortium (NWO Zwaartekracht).

\appendix 
\onecolumngrid

\section{Fermionic algebra and sequence manipulations}
\label{app:ferm_seq}

Let $\mathcal{H}$ be the Hilbert space of $n$-fermion states on $l$ modes. Consider any fixed single-particle basis for the fermionic modes, defined by annihilation operators $\{\psi_j | j\in[l]\}$. Operators $\psi_j$ and their Hermitian conjugate creation operators $\psi^\dag_j$ obey the algebra
\begin{align}
    \psi_{j_1}\psi_{j_2}=-\psi_{j_2}\psi_{j_1}, ~\psi_{j_1}\psi^\dag_{j_2}+\psi^\dag_{j_2}\psi_{j_1}=\delta_{j_1 j_2},
\end{align}
where $\delta_{j_1 j_2}$ is the Kronecker delta. For an ordered sequence $S=(s_1,..s_n)$ ($0\leq s_1<..<s_n\leq l$), we define annihilation operator monomial $\Psi_S=\psi_{s_1}..\psi_{s_n}$. Notation $\Psi^\dag_S$ is defined as $\Psi^\dag_S=\left(\Psi_S\right)^\dag=\psi^\dag_{s_n}..\psi^\dag_{s_1}$ (note the change in multiplication order).
The basis $\{\ket{S}\}$ of $\mathcal{H}$ is defined as $\ket{S}=\psi^\dag_{s_n}..\psi^\dag_{s_1}\ket{\varnothing }=\left(\Psi_S\right)^{\dag}\ket{\varnothing }$. Here $\ket{\varnothing }$ is the Fock space vacuum.

For more convenient fermionic algebra manipulations, we introduce some formalities involving sequences of integers. We will only consider finite sequences; the length of a sequence $A=(a_1,..,a_{|A|})$ is denoted as $|A|$.  If all elements in $A$ are smaller than all elements in $B$ ($a<b,~\forall a\in A,~b\in B$), we say $A<B$. The ordered version of a sequence $A$ without repeating elements is denoted $\mathrm{ord}(A)$, 
i.e., $\mathrm{ord}(A)=(a_{\pi(1)},a_{\pi(2)}..a_{\pi(|A|)})$ for the ordering permutation $\pi:[|A|]\rightarrow [|A|]$, $a_{\pi(1)}< a_{\pi(2)}<..< a_{\pi(|A|)}$. A signature function $\sigma(A)=\pm 1$ for $A$ without repeating elements is equal to the sign of a permutation $\pi$ required to order $A$. We also define $\sigma(A,B)$ for a pair of sequences $A=(a_1, a_2,..a_{|A|})$ and $B=(b_1, ..b_{|B|})$ \textit{without shared or repeating elements}, as a signature for a single concatenated sequence $\sigma\left((a_1, a_2,..a_{|A|},b_1, ..b_{|B|})\right)$. Through sequence concatenation we also define $\sigma(A, B, C)$ for three arguments, as well as $\sigma$ for more arguments.

An ordered sequence $A=(a_1,..a_{|A|})$ without repeating elements naturally maps to a set, namely $\mathrm{set}(A)\defeq\{a_1,..a_{|A|}\}$. Vice-versa, a set of integers $A_{\mathrm{set}}=\{a_1,..a_{|A|}\}$ can be mapped to an ordered sequence $\mathrm{seq}(A_{\mathrm{set}})\defeq \mathrm{ord}\left((a_1,..a_{|A|})\right)$. This one-to-one mapping allows to use set-theoretic notions for ordered sequences without repeating elements. An intersection of two sequences is defined as $A\cap B = \mathrm{seq}(\mathrm{set}(A)\cap \mathrm{set}(B))$, same for union $\cup$, difference $\bs$ and symmetric difference $\triangle$. $A$ is a subsequence of $B$, or $A\subset B$, if $\mathrm{set}(A)\subset \mathrm{set}(B)$. For an integer $l$ with $[l]$ we denote the sequence $(1,2,..l)$, rather than a set $\{1,2,..,l\}$ as per usual. Thus adapting set-theoretic notation to sequences allows to give a simpler form to many technical parts of this work.

A few useful properties of the signature function $\sigma$ can be stated with help of set-theoretic notation. For ordered sequences without repeating or shared elements $A$, $B$ and $C$, we have (note $A\cup B = \mathrm{ord}\left((a_1,..,a_{|A|},b_1,..,b_{|B|})\right)$)
\begin{align}
    \sigma(A, B , C)&=\sigma(A, B)\sigma(A\cup B, C), \label{eq:sigma_rule_1}\\
    \sigma(A\cup B , C)&=\sigma(A, C)\sigma(B, C),\\
    \sigma(A, B)&=\sigma(B, A)(-1)^{|A|\cdot |B|},\\
    \sigma(A, B)&=1~\rm{if}~A<B.\label{eq:sigma_rule_4}
\end{align} 

As announced, the introduced formalities are convenient in dealing with fermionic algebra. For example
\begin{align}
    \Psi_A\Psi_B&=\sigma(A,B)\Psi_{A\cup B} \,\mathrm{for}\, A\cap B=\varnothing , \\
    \Psi^\dag_B \ket{A}&=\sigma(A, B) \ket{A\cup B}\,\mathrm{for}\, A\cap B=\varnothing , \\
    \Psi_B \ket{A}&=\sigma(A\bs B, B) \ket{A\bs B} \,\mathrm{for}\, B\subset A.
\end{align}

\section{Twisted RDM}
\label{app:twisted_RDM}
Here we obtain the decomposition of twisted k-RDM $\tilde{\rho}_k$ (Eq.\,\ref{eq:conj_rho_def}) into $k'$-RDM's $\rho_{k'}$ with $k'\leq k$. The key step is using relation $\psi_q\psi^\dag_p=-\psi^\dag_p\psi_q +\delta_{pq}$ in $\Psi_Q\Psi^\dag_P$ to permute all $\psi^\dag_p$ operators to the left. This will give rise to contractions $\Psi^\dag_{\bar{P}'}\Psi_{\bar{Q}'}\delta_{Q'P'}$ 
of all possible subsets $Q'\subset Q$ and $P'\subset P$ (denote $\bar{Q}'=Q\bs Q'$, same for $P$). To determine the sign in front of any such contraction, it is sufficient to consider an ad-hoc permutation 
\begin{align}
\Psi_Q\Psi^\dag_P&=\sigma(\bar{Q}',Q')\sigma(\bar{P}',{P}')\Psi_{\bar{Q}'}\Psi_{{Q}'}\Psi^\dag_{{P}'}\Psi^\dag_{\bar{P}'}\notag\\&=(-1)^{|{P}'|}\sigma(\bar{Q}',{Q}')\sigma(\bar{P}',{P}')\delta_{Q'P'}\Psi_{\bar{Q}'}\Psi^\dag_{\bar{P}'}+\mathrm{(..)}\notag\\&=(-1)^{|{P}'|+|\bar{P}'|}\sigma(\bar{Q}',{Q}')\sigma(\bar{P}',{P}')\delta_{Q'P'}\Psi^\dag_{\bar{P}'}\Psi_{\bar{Q}'}+\mathrm{(..)}
\end{align}
where $\mathrm{(..)}$ contains other contractions. By definitions of the matrix elements $\tilde{\rho}_k^{Q,P}=\bra{v}\Psi_Q\Psi^\dag_P\ket{v}$ (Eq.\,\ref{eq:conj_rho_def}) and $\rho_k^{Q,P}=\bra{v}\Psi^\dag_P\Psi^{\phantom\dag}_Q\ket{v}$  we then obtain ($|P|=|Q|=k$):
\begin{align}
    \tilde{\rho}_k^{Q,P}&=\sum_{P'\subset P,Q'\subset Q}(-1)^{|{P}'|+|\bar{P}'|}\sigma(\bar{Q}',{Q}')\sigma(\bar{P}',{P}')\delta_{Q'P'}\rm{Tr}[\Psi^\dag_{\bar{P}'}\Psi_{\bar{Q}'}\ket{v}\bra{v}] \notag
    \\&=\sum_{\substack{P'\subset P,Q'\subset Q\\ |P'|=|Q'| }}(-1)^{|P|}\sigma(\bar{Q}',{Q}')\sigma(\bar{P}',{P}')\delta_{Q'P'}\rho_{k=|\bar{P}'|}^{\bar{Q}'\bar{P}'}.
\end{align}
When sets $\bar{P}', \bar{Q}'$ are empty, we define $\rho_{k=|\bar{P}'|}^{\bar{Q}'\bar{P}'}\equiv \rm{Tr}[\ket{v}\bra{v}]=1$ .

\section{Numerical studies of Hubbard and SYK models}
\label{app:numerics_physics}

To construct the plots given in Fig.\,\ref{fig:numerics}, we consider the Hubbard model for spinful fermions on a chain with periodic boundary conditions. The single-particle operators are denoted $\psi_{x,\sigma}$ and $\psi^{\dag}_{x,\sigma}$ where $\sigma\in\{\uparrow,\downarrow\}$, $x = 1,...,l/2$ and $\psi_{l/2+1,\sigma} \defeq \psi_{1,\sigma}$, $\psi^\dag_{l/2+1,\sigma} \defeq \psi^\dag_{1,\sigma}$. Parameter $l$ is always even, so there are $l$ modes in total. For our simulations we used $l = 12$. The Hamiltonian of the model is
\begin{align}
H=&\textstyle\sum\limits_{x=1}^{l/2}\sum\limits_{\sigma\in\{\uparrow,\downarrow\}} \left( -t \psi^\dag_{x, \sigma} \psi_{x+1, \sigma}-t \psi^\dag_{x+1, \sigma} \psi_{x, \sigma}\right) + \sum\limits_{x=1}^{l/2} U n_{x, \uparrow}n_{x, \downarrow} \,,
\label{eq:Hubbard_model_def}
\end{align}
where $n_{x, \sigma}=\psi^\dag_{x, \sigma} \psi_{x, \sigma}$. We consider the model to be at half-filling $n=l/2$.\\

We have also studied the complex SYK model. The Hamiltonian of the model on $l$ sites is
\begin{equation}
    H = \sum_{a>b>c>d=1}^{l} (t_{abcd} \psi_a^{\dag}\psi_b^{\dag}\psi_c \psi_d + h.c.)
\end{equation}
with the random couplings, normally distributed according to
\begin{equation}
    \langle t_{abcd} \, t^*_{abcd} \rangle = \frac{1}{(2l)^{3}}, ~~~
    \langle (t_{abcd})^2 \rangle = \langle (t^*_{abcd})^2 \rangle = 0. 
\end{equation}
The data for the ground state of the SYK model given in Fig.\,\ref{fig:numerics} was produced for $l=12$ at half-filling $n=6$.\\

We have used the {\tt QuSpin} Python package to run the numerics. More about the library can be found on the web-page $\mathtt{http://quspin.github.io/QuSpin/}$. The eigenstates of Hamiltonians have been found using the exact diagonalization method. We have tried several methods to compute twisted purities $\omega_k$. The most efficient method which we found for small values of $k$ was by calculating the matrix elements of RDMs. For large values of $k$ it proved more efficient to sum over the squared generalized Plücker relations.

There is an alternative method to compute the whole set of twisted purities directly, using the definition given in Eq.\,\ref{eq:omega_def}. Once the operator $\Omega$ and the tensor square of the state are constructed, one can act multiple times with $\Omega$ on the tensor square. Each subsequent action takes the same time to compute, and each iteration allows to extract the new $\omega_k$ by taking the norm squared of the result. However, since one has to construct and store a tensor square of the state, the method requires extensive memory resources. Therefore, we opted for the methods described in the previous paragraph.

The Python code written in the course of this numerical investigation, as well as the generated datasets, are available from the corresponding author on request.
\section{Products of Bell-like states and their thermodynamic limits}
\label{app:bell_therm_purity}

We call the state
\begin{equation}
    \ket{\varphi} = \tfrac{1}{\sqrt{2}} (\ket{1,2} + \ket{3,4})
\end{equation}
a Bell-like fermionic state. It is one of the simplest states for which $\omega_1 \neq 0$. In this Appendix we compute twisted purities of a state which is a high tensor power of $\ket{\varphi}$. It proves convenient to use the generating function
\begin{equation}
    \label{eq:partitionfunc}
    Z(\ket{v}, \beta)=\sum_{k=0}^{l} \omega_k(\ket{v}) \beta^{2k}
\end{equation}
because of its multiplicative properties. Let $\ket{v} = \ket{v_1} \wedge \ket{v_2}$ be the product state of fermionic system containing two subsystems. In this case $\Omega = \Omega_1 + \Omega_2$, $[\Omega_1 , \Omega_2] = 0$. Noting that
\begin{equation}
    \sum_k \omega_k\beta^{2k} = |e^{\beta \Omega} \ket{v} \otimes \ket{v}|^2.
\end{equation}
one can deduce
\begin{align}
    \notag Z(\ket{v_1} \wedge \ket{v_2}, \beta) = |e^{\beta \Omega} \ket{v} \otimes \ket{v}|^2 = |e^{\beta (\Omega_1 + \Omega_2)} (\ket{v_1}\wedge \ket{v_2}) \otimes (\ket{v_1} \wedge \ket{v_2})|^2 = \\ 
    = |(e^{\beta \Omega_1} \ket{v_1} \otimes \ket{v_1}) \otimes (e^{\beta \Omega_2} \ket{v_2} \otimes \ket{v_2})|^2 = Z(\ket{v_1},\beta) Z(\ket{v_2},\beta).
\end{align}
Now consider a state $(\ket{\varphi})^{\wedge p}$ on $l = 4p$ sites and with $n = 2p$ particles. A direct computation shows that $Z(\ket{\varphi},\beta) = 1 + \beta^2 + \beta^4$, so
\begin{equation}
    Z(\ket{\varphi}^{\wedge l/4}, \beta) = (1+\beta^2 + \beta^4)^{l/4}.
\end{equation}
For example, for $l=12$ this equation gives $Z = 1 + 3 \beta^2 + 6 \beta^4 + 7 \beta^6 + 6 \beta^8 + 3 \beta^{10} + \beta^{12}$. \\

We use the multiplicativity of $Z$ to identify the properties of $\omega_k$ for locally-correlated systems in the thermodynamic limit. In the process, we will obtain a version of central limit theorem. By `locally-correlated' we mean that the system can be broken into a set of subsystems, whose sizes are small compared to the whole, while the correlations between subsystems can be neglected. Let $Z_{0}(\beta^2) = \sum_{k=1}^{r} \beta^{2k} \omega_{0,k}$ be the generating function of twisted purities for the state of one small subsystem. Generically, it will be a polynomial in $\beta^2$ of degree $r = \rho d$, where $d \ll l$ is the number of fermionic sites in the subsystem and $\rho = n/l$ is the filling factor of the system. We assume $\rho < 1/2$. The generating function for the whole system would be
\begin{equation}
    Z(\beta^2) = (Z_0(\beta^2))^{\tfrac{l}{d}} = \prod_{i=1}^{r} (1 + p_i \beta^2)^{\tfrac{l}{d}}.
\end{equation}
for some coefficients $p_i$, coming from factorization of the polynomial $Z_0(\beta^2) = \prod_{i=1}^{r} (1 + p_i \beta^2)$ with the property $Z_0(0) = 1$. Using the binomial and Stirling formulas one can find that each term behaves in $l \to \infty$ limit as
\begin{equation}
    (1 + p_i \beta^2)^{\tfrac{l}{d}} \sim \sum_k \beta^{2k} a_i \exp \left(l b_i - l c_i(k/l - \varkappa_i)^2\right),
\end{equation}
\begin{equation}
    a_i = \dfrac{1+p_i}{\sqrt{2 \pi l}}\sqrt{\frac{d}{p_i}}, ~~~ b_i = \frac{\log[1+p_i]}{d}, ~~~ c_i = \dfrac{d(1+p_i)^2}{2 p_i}, ~~~ \varkappa_i = \dfrac{p_i}{d(1+p_i)}.
\end{equation}
Now, multiplying the expressions with the different $p_i$, one finds total generating function to be
\begin{equation}
    Z(\beta^2) \sim \sum_k \beta^{2k} \tilde{a} \exp \left(l \tilde{b} - l \tilde{c}(k/l - \tilde{\varkappa})^2\right), ~~~ \text{where} ~~~ \tilde b = \sum_{i=1}^{r} b_i, ~~~ \dfrac{1}{\tilde{c}} = \sum_{i=1}^{r} \dfrac{1}{c_i}, ~~~ \tilde{\varkappa} = \sum_{i=1}^{r} \varkappa_i.
\end{equation}
Taking logarithmic derivatives of $Z_0(\beta^2)$, the parameters can be expressed in terms of the purities of subsystems
\begin{equation}
    e^{l\tilde{b}} = \left(\sum_{k=1}^{r} \omega_{0,k} \right)^{l/d}, ~~~
    \tilde{\varkappa} = \dfrac{1}{d} \dfrac{\sum_{k=1}^{r} k \omega_{0,k}}{\sum_{k=1}^{r} \omega_{0,k}}, ~~~
    \dfrac{1}{\tilde{c}} = \dfrac{2}{d} \dfrac{\sum_{k=1}^{r} k^2 \omega_{0,k}}{\sum_{k=1}^{r} \omega_{0,k}} - \dfrac{2}{d} \left( \dfrac{\sum_{k=1}^{r} k \omega_{0,k}}{\sum_{k=1}^{r} \omega_{0,k}}\right)^2.
\end{equation}
Thus we find that in thermodynamic limit, only a few parameters control the dominating contributions to $\omega_k$.

It is interesting to note that a tensor power of Bell states does not belong to a class $\mathcal{G}_k$ for a finite $k$, despite consisting of disentangled finite blocks. This example demonstrates that strong fermionic correlations signalled by nonzero $\omega_{k=O(n)}$, similarly to extensive fermionic magic \cite{cudby23, reardon23, dias23}, are compatible with an absence of system-wide entanglement. It is an intriguing open question, whether entanglement and $k$-body fermionic correlations can be simultaneously captured with a single measure in a useful way. In our view, the answer to this question may well be negative. It also connects to the problem of constructing size-consistent fermionic correlation measures, discussed in the main text.

\section{Purities of Haar random states}
\label{app:Haar_purity}
In this section we will compute average twisted purities of the real random states $\ket{v} = \sum_{A \in [l]^k} v_{A} \ket{A}$, $v_{A} \in \mathbb{R}$, in a Hilbert space of $n$ fermions on $l$ sites. We consider the states to be distributed uniformly, i.e. the states are given by a random points on $S^d$, $d = {l \choose n}$, with the $SO(d)$ invariant distribution. We will refer to these states as to (real) Haar states, and denote the average over this measure by $\langle ~~~ \rangle_{|v|^2=1}$. The averaged purity is
\begin{equation}
    \langle \omega_k \rangle  = \dfrac{1}{(k!)^2} \left\langle \bra{v} \otimes \bra{v} (\Omega^{\dag})^{k} \Omega^k \ket{v} \otimes \ket{v} \right\rangle _{|v|^2=1}.
\end{equation}
Because of $SO(d)$ invariance, the four-point correlation function $\langle v_A v_B v_C v_D \rangle_{|v|^2=1}$ should be some quadratic combination of Kronecker $\delta$-symbols. Since it is symmetric under permutation of indices, up to overall factor it should be equal to $\delta_{AB} \delta_{CD} + \delta_{AC} \delta_{BD} + \delta_{AD} \delta_{BC}$. The normalization factor can be fixed contracting the correlation function with $\delta_{AB} \delta_{CD}$, and using that $\langle 1 \rangle_{|v|^2=1} = 1$. This results in
\begin{equation}
\label{eq:4ptHaar}
    \langle v_A v_B v_C v_D \rangle_{|v|^2=1} = \dfrac{1}{d(d+2)}(\delta_{AB} \delta_{CD} + \delta_{AC} \delta_{BD} + \delta_{AD} \delta_{BC}).
\end{equation}
which gives for the averaged twisted purities
\begin{align}
    \langle \omega_k \rangle = & \dfrac{1}{d(d+2)} \sum_{|I|,|J|=k} \sum_{|A|,|B|=n} \left(  \langle A | \Psi_I \Psi^{\dag}_J | A \rangle \langle B | \Psi^\dag_I \Psi_J | B \rangle + \right. \\
    & \left. + \langle A | \Psi_I \Psi^{\dag}_J | B \rangle \langle B | \Psi^\dag_I \Psi_J | A \rangle + \langle A | \Psi_I \Psi^{\dag}_J | B \rangle \langle A| \Psi^\dag_I \Psi_J | B \rangle \right).
\end{align}
Three terms in the sum can be computed as
\begin{equation}
    \sum_{|I|,|J|=k} \sum_{|A|,|B|=n} \langle A | \Psi_I \Psi^{\dag}_J | A \rangle \langle B | \Psi^\dag_I \Psi_J | B \rangle = \sum_{|I|,|J|=n} \sum_{|A|,|B|=K} \delta_{I=J}\delta_{I \cap A = \varnothing } \delta_{J \subset B},
\end{equation}
\begin{equation}
    \sum_{|I|,|J|=k} \sum_{|A|,|B|=n} \langle A | \Psi_I \Psi^{\dag}_J | B \rangle \langle B | \Psi^\dag_I \Psi_J | A \rangle = \sum_{|I|,|J|=k} \sum_{|A|,|B|=n} \delta_{A\cup I=B\cup J} \delta_{J \cap B = \varnothing } \delta_{I \cap A = \varnothing } \delta_{A \backslash J = B \backslash I} \delta_{J \subset A} \delta_{I \subset B},
\end{equation}
\begin{equation}
    \sum_{|I|,|J|=k} \sum_{|A|,|B|=n} \langle A | \Psi_I \Psi^{\dag}_J | B \rangle \langle A| \Psi^\dag_I \Psi_J | B \rangle = 0,
\end{equation}
which gives after summation
\begin{equation}
    \label{eq:Haar}
    \langle \omega_k \rangle = \dfrac{ {l \choose k} {l-k \choose n} {l-k \choose n-k} + {l \choose n-k} {l-n \choose k} {l-n+k \choose k}}{ {l \choose n}\left( {l \choose n} + 2 \right) } = \frac{1}{{l \choose n} + 2}\dfrac{k! (l-k)! + n! (l-n)!}{(k!)^2 (n-k)! (l-n-k)!}.
\end{equation}

Since this formula is exact, we can see explicitly how $\langle\omega_k\rangle$ gets concentrated around its thermodynamically preferred value at $l, n \gg 1$. Using the Stirling formula $p! \sim \sqrt{2\pi p}(p/e)^p$ one can estimate the relation of contributions of the first and second terms in Eq.\,\ref{eq:Haar} as
\begin{equation}
    \label{eq:formfactrel}
    \dfrac{k! (l-k)!}{n! (l-n)!} \sim e^{l(s(\varkappa) - s(\rho))}, ~~~ s(x) = \Lambda(x) + \Lambda(1-x), ~~~ \Lambda(x) = x \log (x)
\end{equation}
where $k = \varkappa l$, $n = \rho l$. For half-filled system one has $0<\varkappa<\rho < 1/2$ and $s(x)$ is monotonically decreasing, so $s(\varkappa)>s(\rho)$. Thus the second term in (\ref{eq:Haar}) is exponentially suppressed. Now the dominating value of $k$ can be found by extremizing the function
\begin{equation}
    \dfrac{\log \langle \omega_k \rangle}{l} \sim \Lambda(\rho) + \Lambda(1-\rho) + \Lambda(1-\varkappa) - \Lambda(\rho - \varkappa)  - \Lambda(\varkappa) - \Lambda(1 - \rho - \varkappa)
\end{equation}
with respect to $\varkappa$. This gives:
\begin{equation}
    \langle \omega_k \rangle \sim a\exp \left( l b - l c(k/l-\varkappa_*)^2 \right)
\end{equation}
where
\begin{equation}
    2\varkappa_* (1 - \varkappa_*) = \rho(1-\rho) ~~~ \Rightarrow ~~~ \varkappa_* = \frac{1}{2}(1 - \kappa), ~~~ \kappa = \sqrt{1-2\rho(1-\rho)},
\end{equation}
\begin{equation}
    a = \dfrac{1}{\sqrt{\pi l}}\sqrt{\dfrac{1+\kappa}{1-\kappa}}, ~~~ c = \dfrac{2 \sqrt{1-2\rho(1-\rho)}}{\rho(1-\rho)},
\end{equation}
\begin{equation}
    b = \Lambda(1-\rho)+\Lambda(\rho)+ \frac{1}{2} \Lambda(1+\kappa) - \frac{1}{2} \Lambda(1-\kappa) - \frac{1}{2} \Lambda(\kappa-2\rho + 1) - \frac{1}{2} \Lambda(\kappa+2\rho-1).
\end{equation}

It is curious to note the simple behaviour of $\langle \omega_k \rangle$ at $k \ll n < l$
\begin{equation}
    \langle \omega_k \rangle \sim \dfrac{(\rho  (1-\rho))^k l^k}{k!} = \dfrac{(\langle\omega_1 \rangle)^k}{k!}.
\end{equation}
In this case the generating function from Eq.\,\ref{eq:partitionfunc} behaves as
\begin{equation}
    \langle Z(\beta) \rangle \sim e^{ \beta^2 \langle \omega_1 \rangle}
\end{equation}
which is consistent with multiplicativity under the disjoint union of the systems because $\langle\omega_1\rangle$ is additive.

\section{Upper bound on the twisted purities}
\label{app:bound}
The values of the twisted purities can be bounded applying the Cauchy-Schwarz inequality for the operators
\begin{equation}
    \label{eq:CauchySchwarz}
    \Tr[A^{\dag} B] \leq \sqrt{\Tr[A^\dag A] \Tr[B^\dag B]}.
\end{equation}
Using that both k-RDM and twisted k-RDM are Hermitian operators $(\rho_k^{P,Q})^* = \rho_k^{Q,P}$, $(\tilde{\rho}_k^{P,Q})^* = \tilde{\rho}_k^{Q,P}$ one finds
\begin{equation}
\omega_k = \Tr[\tilde{\rho}_k \rho_k] \leq \sqrt{\Tr[\tilde{\rho_k}^\dag \tilde{\rho}_k] \Tr[\rho_k^\dag \rho_k]} = \sqrt{\Tr[\tilde{\rho}_k^2] \Tr[\rho_k^2]}.
\end{equation}
The operator $\rho_k$ is positive, i.e. all of its eigenvalues are $\geq 0$, which gives for its trace $\Tr[\rho_k^2] \leq (\Tr[\rho_k])^2$. To compute the traces of k-RDM
\begin{equation}
    \Tr[\rho_k] = \sum_{P \subset [l], \, |P|=k} \rho_k^{P,P} = \sum_{0\leq i_k<...<i_1 \leq l} \bra{v} \psi_{i_k}^{\dag} ... \psi_{i_1}^{\dag} \psi_{i_1} ... \psi_{i_k} \ket{v} = \frac{1}{k!}\sum_{i_k\neq ... \neq i_1} \bra{v} n_{i_k} ... n_{i_1} \ket{v}
\end{equation}
note that the boundary terms in a sum can be resolved as
\begin{equation}
    \sum_{i_k\neq ... \neq i_1} = \sum_{i_k}\sum_{i_{k-1}\neq ... \neq i_1} - \sum_{1 \leq \alpha \leq k} \sum_{i_{k-1} = i_{\alpha}\neq ... \neq i_1}.
\end{equation}
Together with $\sum_{0\leq i \leq l} {n_i} = n$ and $(n_i)^2 = n_i$ this gives a recursive formula
\begin{equation}
    \sum_{i_k\neq ... \neq i_1} \bra{v} n_{i_k} ... n_{i_1} \ket{v} = (n-k+1) \sum_{i_{k-1}\neq ... \neq i_1} \bra{v} n_{i_{k-1}} ... n_{i_1} \ket{v} = ... = \frac{n!}{(n-k)!}.
\end{equation}
Similar consideration is applicable to $\tilde{\rho}_k$ with $n_i$ being replaced by $1-n_i$ and $n$ being replaced by $l-n$. Collecting all the formulas together one gets a bound
\begin{equation}
    \omega_k \leq  {n \choose k} {l-n \choose k}.
\end{equation}

\section{Derivation of extended Wick's rule (recursive form)}

\label{app:gen_wick_iter}

\begin{theorem}
\label{thm:gen_wick_iter}
Let $\ket{v}=\sum_{S\subset [l], |S|=n} v(S)\ket{S}$ be a state in $\mathcal{H}$ such that $\Omega^k \ket{v}\ket{v}=0$. Consider three ordered sequences $G=(1,..n)$, $P\subset G$ and $Q\subset [l]\bs{G}$. If $|Q|=|P|> k$,
\begin{align}
\frac{v(G\cup Q \bs P )}{v(G)}
=
\sum_{\substack{P'\subsetneq P,~ Q'\subsetneq Q}}
(-1)^{|\bar{P}'|+1}\,\frac{|P'|}{|P|}\,\sigma(\bar{P}',P')\,\sigma(Q',\bar{Q}') ~\frac{v(G\cup Q' \bs P' )}{v(G)} \frac{v(G\cup \bar{Q}' \bs \bar{P}' )}{v(G)}
\label{eq:gen_wick_iter}.
\end{align}

\end{theorem}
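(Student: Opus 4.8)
The plan is to derive the recursion entirely from the scalar ``component'' equations in Eq.\,\ref{eq:hi_Plücker_components}, which are the matrix elements $\bra{A}\otimes\bra{B}\,\Omega^k\,\ket{v}\otimes\ket{v}=0$ taken over ordered sequences $A$ of length $n-k$ and $B$ of length $n+k$. The first thing I would record is a pair of ``conservation laws'': for fixed $A,B$, every summand in Eq.\,\ref{eq:hi_Plücker_components} involves two $n$-sets $S_1=A\cup R$ and $S_2=B\bs R$, and since $R\subset B$ and $R\cap A=\varnothing$ one checks $S_1\cap S_2=A\cap B$ and $S_1\cup S_2=A\cup B$, both independent of $R$. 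Because the right-hand side of Eq.\,\ref{eq:gen_wick_iter} involves exactly the pairs $S_1=G\cup Q'\bs P'$, $S_2=G\cup\bar{Q}'\bs\bar{P}'$, for which $S_1\cap S_2=G\bs P$ and $S_1\cup S_2=G\cup Q$, this dictates the choice of $A,B$: any admissible pair must satisfy $A\cap B=G\bs P$ and $A\cup B=G\cup Q$, which forces $A=(G\bs P)\cup A_0$ and $B=(G\bs P)\cup\bigl((P\cup Q)\bs A_0\bigr)$ for a ``split set'' $A_0\subset P\cup Q$ with $\lvert A_0\rvert=m-k$, writing $m=\lvert P\rvert=\lvert Q\rvert>k$.

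Next I would fix one split set $A_0$ and unpack the corresponding instance of Eq.\,\ref{eq:hi_Plücker_components}. Here $B\bs A=(P\cup Q)\bs A_0$, so $R$ runs over its $k$-element subsets; decomposing $A_0=A_0^P\cup A_0^Q$ and $R=R^P\cup R^Q$ into their parts inside $P$ and inside $Q$, the assignment $P'=P\bs(A_0^P\cup R^P)$, $Q'=A_0^Q\cup R^Q$ rewrites each summand as $v(G\cup Q'\bs P')\,v(G\cup\bar{Q}'\bs\bar{P}')$ times a signature, and the size constraints force $\lvert P'\rvert=\lvert Q'\rvert$. For a fixed $A_0$ only pairs with $P'\subseteq P\bs A_0^P$ and $Q'\supseteq A_0^Q$ are produced, so no single choice of $A_0$ sees all of them; this is why a sum over split sets is needed.

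The step that actually produces Eq.\,\ref{eq:gen_wick_iter} is to sum these instances over all split sets $A_0$ of size $m-k$ and reorganize the result by the target pair $(P',Q')$. A fixed pair arises from many $(A_0,R)$, parametrized by how the block $\bar{P}'$ and the block $Q'$ are divided between $A_0$ and $R$; introducing $a=\lvert A_0\cap\bar{P}'\rvert$, the \emph{unsigned} multiplicity is $\sum_a\binom{\lvert\bar{P}'\rvert}{a}\binom{\lvert Q'\rvert}{m-k-a}=\binom{m}{k}$ by Vandermonde, hence independent of $j=\lvert P'\rvert$. All of the $j$-dependence therefore sits in the signatures $\sigma(A,R)\,\sigma(B\bs R,R)$, which vary with $a$ through the reshuffling of $\bar{P}'$ and $Q'$. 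Using the signature identities of Eqs.\,\ref{eq:sigma_rule_1}--\ref{eq:sigma_rule_4} to pull the fixed pieces $\sigma(\bar{P}',P')$ and $\sigma(Q',\bar{Q}')$ out front, the remaining signed sum over $a$ is the heart of the calculation; I expect it to collapse, via a signed Vandermonde-type identity, to a factor proportional to $(-1)^{\lvert\bar{P}'\rvert+1}\,\tfrac{\lvert P'\rvert}{\lvert P\rvert}$, with the same $\binom{m}{k}$ scaling that appears in the diagonal.

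Finally I would collect the diagonal pairs: $(P',Q')=(\varnothing,\varnothing)$ is produced only by split sets $A_0\subseteq P$, and $(P',Q')=(P,Q)$ only by $A_0\subseteq Q$, and in both cases the product collapses to $v(G)\,v(G\cup Q\bs P)$. Isolating these, moving them to the left-hand side, and dividing by $v(G)^2$ leaves a sum over off-diagonal pairs, i.e.\ $1\le\lvert P'\rvert\le m-1$, which is exactly $P'\subsetneq P$, $Q'\subsetneq Q$ with $\lvert P'\rvert=\lvert Q'\rvert$; the coefficient $\lvert P'\rvert/\lvert P\rvert$ vanishing at $P'=\varnothing$ is what lets one keep the empty pair inside the sum. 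The common $\binom{m}{k}$ factor cancels upon division, leaving the stated coefficients. The \textbf{main obstacle} is precisely the signed sum over the split parameter $a$: matching it to $(-1)^{\lvert\bar{P}'\rvert+1}\tfrac{\lvert P'\rvert}{\lvert P\rvert}\sigma(\bar{P}',P')\sigma(Q',\bar{Q}')$ requires careful tracking of how the orderings of $A_0^P,R^P$ inside $\bar{P}'$ (and of $A_0^Q,R^Q$ inside $Q'$) enter the signatures, whereas the reindexing and the unsigned Vandermonde count are routine.
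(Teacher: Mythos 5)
Your setup is sound as far as it goes: the conservation laws $S_1\cap S_2=A\cap B$, $S_1\cup S_2=A\cup B$ do single out exactly the components with $A=(G\bs P)\cup A_0$, $|A_0|=m-k$; the reindexing $(A_0,R)\mapsto(P',Q')$ is correct; the unsigned multiplicity $\binom{m}{k}$ via Vandermonde is correct; and the endgame (isolate the diagonal, divide by $v(G)^2$, use the vanishing of $|P'|/|P|$ at $P'=\varnothing$) matches the paper's. But the step you defer --- the signed sum over the split parameter --- is not a technicality, it is where the theorem lives, and the plan as written fails there. Concretely, you propose to ``sum these instances over all split sets $A_0$'' with no weights. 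Take $(l,n)=(4,2)$, $G=P=(1,2)$, $Q=(3,4)$, $k=1$, $m=2$: the four component equations for $A_0=(1),(2),(3),(4)$ come out as $+E,\,-E,\,+E,\,-E$ respectively, where $E$ denotes the standard Plücker relation $v(1,2)v(3,4)-v(1,3)v(2,4)+v(1,4)v(2,3)=0$, so the unweighted sum is identically $0=0$ and yields nothing. Each component must first be normalized by an $A_0$-dependent sign (the paper strips off the factor $\sigma(\tilde{p},P\bs\tilde{p})$) before summing; you neither identify these weights nor prove the resulting signed identity that is supposed to produce $(-1)^{|\bar{P}'|+1}\,|P'|/|P|\,\sigma(\bar{P}',P')\,\sigma(Q',\bar{Q}')$.

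The paper avoids your ``main obstacle'' entirely by first reducing to $|P|=|Q|=k+1$: since $\Omega^{k}\ket{v}\ket{v}=0$ implies $\Omega^{m-1}\ket{v}\ket{v}=0$, the case $|P|=|Q|=m$ is the theorem with $k$ replaced by $m-1$. Then $|A_0|=1$, only the singletons $\tilde{p}\in P$ are used, each normalized component contains a given pair $(P',Q')$ at most once (precisely when $\tilde{p}$ lies in the appropriate block), and $|P'|/|P|$ emerges from a cancellation-free count --- no signed Vandermonde identity is needed. If you insist on general $m-k\geq 2$, note that the net signed coefficient of an off-diagonal pair would have to be $\tfrac{|P'|}{|P|}$ times that of the diagonal, and $\tfrac{|P'|}{|P|}\binom{m}{k}$ is not an integer in general (e.g.\ $m=4$, $k=2$, $|P'|=1$ gives $3/2$), so even the diagonal contributions must partially cancel in a pattern you have not exhibited. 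I recommend adopting the reduction to $m=k+1$, after which your split sets collapse to the paper's $\tilde{p}$ and the argument closes.
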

\begin{proof}
Since $\Omega^k \ket{v}^{\otimes 2}=0$ implies $\Omega^{k'} \ket{v}^{\otimes 2}=0$ for all $k'> k$, it is sufficient to prove Eq.\,\ref{eq:gen_wick_iter} for $P$ and $Q$ such that $|P|=|Q|=k+1$. Consider the component $\bra{A}\bra{B}\Omega^k \ket{v}\otimes \ket{v}=0$ (Eq.\,\ref{eq:hi_Plücker_components}):
\begin{align}
&\textstyle\sum_{R\subset(B\backslash A), |R|=k} \sigma(A, R)\,\sigma(B\bs R, R)~ v(A\cup R)\,v(B\bs R)=0,
\label{eq:hi_Plücker_components_repeated}
\end{align}
choosing $A=(G\bs P) \cup \tilde{p}$ and $B=G\cup Q \bs \tilde{p}$ for some fixed $\tilde{p}\in P$. 
Observe that $R$ is a subset of $(B\backslash A)= Q \cup P \bs \tilde{p}$ with length $|R|=k$, while $|Q|=|P|=k+1$. This implies that $R$ can be represented as $R=\left(P'\cup (Q\setminus Q')\right) \bs \tilde{p}$ for certain $P'\subset P$ and $Q'\subset Q$ such that $\tilde{p}\in P'$ and $|Q'|=|P'|$. We also introduce simplified notations $\bar{Q}'=Q\bs Q'$ and $\bar{P}'=P\bs P'$. The goal of these substitutions is to give the amplitude product $v(A\cup R)\,v(B\bs R)$ the form $v(G\cup Q'\bs P')\,v(G\cup \bar{Q}'\bs \bar{P}')$, bringing Eq.\,\ref{eq:hi_Plücker_components_repeated} closer to the desired structure of Eq.\,\ref{eq:gen_wick_iter}.

With the above definitions for $A$, $B$, $R$ and noting that $Q>G$ and $Q>P$ by assumptions of Theorem\,\ref{thm:gen_wick_iter}, let us restructure the sign factor $\sigma(A, R)\,\sigma(B\bs R, R)$:
\begin{align}
\notag \sigma((G\bs P) \cup \tilde{p},R) &~\sigma((G\cup Q\setminus \tilde{p}\setminus R) , R )
=\notag \sigma((G\bs P) \cup \tilde{p},P'\bs \tilde{p}, \bar{Q}' ) ~\sigma((G\cup Q\setminus P '\bs \bar{Q}' ) , P'\bs \tilde{p}, \bar{Q}'  )\\
\notag 
&=\sigma((G\bs P) \cup \tilde{p},P'\bs \tilde{p}) ~\sigma(G\bs P ', Q', P'\bs \tilde{p}, \bar{Q}' )\\
\notag 
&=\sigma((G\bs P) \cup \tilde{p},P'\bs \tilde{p})~\sigma(G\bs P' , P'\bs \tilde{p}, Q',\bar{Q}')~(-1)^{|Q'| (|P'|-1)}\\
\notag 
&=\sigma((G\bs P) \cup \tilde{p},P'\bs \tilde{p})~\sigma(G\bs P' , P'\bs \tilde{p})~\sigma(Q',\bar{Q}')\\
\notag 
&=\sigma(G\bs P',P'\bs \tilde{p})~\sigma(\bar{P}' ,P'\bs \tilde{p})~\sigma(\tilde{p},P'\bs \tilde{p})~\sigma(G\bs P' , P'\bs \tilde{p})~\sigma(Q',\bar{Q}')\\
\notag 
&=\sigma(\bar{P}' ,P'\bs \tilde{p})~\sigma(\tilde{p},P'\bs \tilde{p})~\sigma(Q',\bar{Q}')\\
\notag 
&=\sigma(\bar{P}' ,P'\bs \tilde{p})~\sigma(\tilde{p},\bar{P}')~\sigma(\tilde{p},P\bs \tilde{p})~\sigma(Q',\bar{Q}')\\
\notag 
&=(-1)^{|\bar{P}'|}~\sigma(\bar{P}' ,P'\bs \tilde{p})~\sigma(\bar{P}',\tilde{p})~\sigma(\tilde{p},P\bs \tilde{p})~\sigma(Q',\bar{Q}')\\
&=(-1)^{|\bar{P}'|}~\sigma(\tilde{p},P\bs \tilde{p})~\sigma(\bar{P}' ,P')~\sigma(Q',\bar{Q}')
\label{eq:v_sign_factor}
\end{align}

The strategy in this derivation was to (i) separate $\sigma$'s involving $Q$ on the one hand and $G$ and $P$ on the other hand, then to (ii) eliminate the dependency of the expression on $G$, and finally to (iii) separate $\sigma$'s involving $P'$ and $\tilde{p}$. Since the factor $\sigma(\tilde{p}, P\bs \tilde{p})$ does not depend on the choice of $P'$ or $Q'$, it can be eliminated from Eq.\,\ref{eq:hi_Plücker_components_repeated} altogether, yielding (recall substitutions $A=(G\bs P) \cup \tilde{p}$, $B=G\cup Q \bs \tilde{p}$, and $R=(P'\cup \bar{Q}') \bs \tilde{p}$):

\begin{align}
\sum_{\substack{P'\subset  P,~ Q'\subset Q\\\rm{:}\,\tilde{p}\in P' \& |P'|=|Q'|}}&  v(G\cup \bar{Q}'\bs \bar{P}')~v(G\cup Q'\bs P' )~(-1)^{|\bar{P}'|}~\sigma(\bar{P}', P') ~\sigma(Q',\bar{Q}')=0.
\label{eq:recur_v_tilde_together}
\end{align}

Moving the terms corresponding to $P'\neq P$, $Q'\neq Q$ to the right-hand side, we obtain

\begin{align}
v(G)~v(G\cup Q\setminus P) =\sum_{\substack{P'\subsetneq P,~ Q'\subsetneq Q\\\rm{:}\,\tilde{p}\in P' \& |P'|=|Q'|}}&  v(G\cup \bar{Q}'\bs \bar{P}')~v(G\cup Q'\bs P' )~(-1)^{|\bar{P}'|+1}~\sigma(\bar{P}', P') ~\sigma(Q',\bar{Q}').
\end{align}

Dividing both sides by $v(G)^2$ yields

\begin{align}
\frac{v(G\cup Q \bs P )}{v(G)} =\sum_{\substack{P'\subsetneq P,~ Q'\subsetneq Q\\\rm{:}\,\tilde{p}\in P' \& |P'|=|Q'|}}&  \frac{v(G\cup Q' \bs P' )}{v(G)}\frac{v(G\cup \bar{Q}' \bs \bar{P}' )}{v(G)}~(-1)^{|\bar{P}'|+1}~\sigma(\bar{P}', P') ~\sigma(Q',\bar{Q}').
\label{eq:recur_v_tilde_separated}
\end{align}

The theorem statement (Eq.\,\ref{eq:gen_wick_iter}) is then obtained by summing up Eq.\,\ref{eq:recur_v_tilde_separated} for all $\tilde{p}\in P$:

\begin{align}
\notag|P|~\frac{v(G\cup Q \bs P )}{v(G)}
= &\sum_{\tilde{p}\in P}
\sum_{\substack{P'\subsetneq P,~ Q'\subsetneq Q\\ \rm{:}\,\tilde{p}\in P'}}
\frac{v(G\cup Q' \bs P' )}{v(G)}\frac{v(G\cup \bar{Q}' \bs \bar{P}' )}{v(G)}~(-1)^{|\bar{P}'|+1}~\sigma(\bar{P}', P') ~\sigma(Q',\bar{Q}')\\
=&
\sum_{\substack{P'\subsetneq P,~ Q'\subsetneq Q}}
\frac{v(G\cup Q' \bs P' )}{v(G)}\frac{v(G\cup \bar{Q}' \bs \bar{P}' )}{v(G)}~|P'|~(-1)^{|\bar{P}'|+1}~\sigma(\bar{P}', P') ~\sigma(Q',\bar{Q}').
\label{eq:gen_wick_iter_tilde}
\end{align}

\end{proof}

\section {Generalized Wick's rule}
\label{app:cumulant_exp}

Consider a state $\ket{v}=\sum v(S)$ that lies in $\GG_k\subset \mathcal{H}$ and three ordered sequences $G=(1,..n)$, $P\subset G$, $Q\subset [l]\bs{G}$. We are concerned with decomposing multi-excitation amplitudes $v(G\cup Q \bs P )/v(G)$ into few-excitation $\frac{v(G\cup Q' \bs P' )}{v(G)}$ ($|P'|=|Q'|\leq k$). As was mentioned in the main text, giving an explicit formula for such a decomposition is difficult. The key simplifying step is to use `connected amplitudes' $v^{(c)}_{P,Q}$:
\begin{align}
v^{(c)}_{P,Q}\defeq
\frac{v(G\cup Q \bs P )}{v(G)}
-
\sum_{\substack{P'\subsetneq P,~ Q'\subsetneq Q}}
\frac{v(G\cup Q' \bs P' )}{v(G)}\frac{v(G\cup \bar{Q}' \bs \bar{P}' )}{v(G)}~
(-1)^{|\bar{P}'|+1}\,\frac{|P'|}{|P|}\,\sigma(\bar{P}', {P}')~\sigma(Q', \bar{Q}').\label{eq:v_c_def}
\end{align}
which capture the deviation of $\ket{v}$ from recursive Wick's rule (Eq.\,\ref{eq:gen_wick_iter}) --- if $\ket{v}\in\GG_k$ and $|P|=|Q|>k$, $v^{(c)}_{P,Q}=0$. For $|P|=|Q|=1$, we define simply $v^{(c)}_{P,Q}=\frac{v(G\cup Q \bs P )}{v(G)}$. Theorem\,\ref{thm:thm_corr_decomposition} below will expresses amplitudes $\frac{v(G\cup Q \bs P )}{v(G)}$ of $\ket{v}\in \GG_k$ in terms of $v^{(c)}_{P',Q'}$ for $|P'|=|Q'|\leq k$ alone. One can think of this statement as an amplitude version of a cumulant expansion. Via Eq.\,\ref{eq:v_c_def}, this implies a decomposition of $\frac{v(G\cup Q \bs P )}{v(G)}$ into $\frac{v(G\cup Q' \bs P' )}{v(G)}$ for $|P'|=|Q'|<k$. 

To formally state the theorem, further terminology needs to be introduced. We denote $\mathrm{Part}(P,Q)$ the set of `partitions of $(P,Q)$'. Namely, $\mathrm{Part}(P,Q)$ consists of all sets $\mathcal{R}$ of type $\mathcal{R}=\{(P'_1,Q'_1),..(P'_{|\mathcal{R}|},Q'_{|\mathcal{R}|})\}$ such that $|P'_a|=|Q'_a|$, and for (disjoint!)  unions $P(\mathcal{R})\equiv \bigcup^{|\mathcal{R}|}_{a=1} P'_a=P$ and $Q(\mathcal{R})\equiv\bigcup^{|\mathcal{R}|}_{a=1} Q'_a=Q$. A more refined set $\mathrm{Part}_k(P,Q)\subset \mathrm{Part}(P,Q)$ is given by applying another constraint $|P'_a|,|Q'_a|\leq k$ to all $(P'_a,Q'_a)\in \mathcal {R}\in \mathrm{Part}(P,Q)$. The signature function for $\mathcal{R}$ is defined as 
\begin{equation}
\label{eq:sigma_R_def}
\sigma\left(\mathcal{R}\right)=\sigma\left(P'_{|\mathcal{R}|},P'_{|\mathcal{R}|-1},..,P'_1\right) \,\sigma\left(Q'_1,Q'_2,..,Q'_{|\mathcal{R}|}\right).
\end{equation}
Next, we introduce a vector $\mathbf{m}(\mathcal{R})=(m_1,m_2,..,m_k)$, where each $m_{k'}\in \mathbb{N}_+$ gives the number of tuples $(P',Q')$ in $\mathcal{R}$ such that $|P|=|Q|=k'$. In other words, $\mathbf{m}$ encodes an integer partition of $|P|=|Q|$ that is defined by $\mathcal{R}$ (which itself is a `partition of $(P,Q)$'). For a general vector $\mathbf{m}$, we denote one-norm $\sum_{k'} m_{k'}$ as $|\mathbf{m}|$, and also will make use of the expression $\mathbf{k}\cdot\mathbf{m}\defeq \sum_{k'} k' m_{k'}$. If $\mathbf{m}=\mathbf{m}(\mathcal{R})$ for $\mathcal{R}\in \mathrm{Part}_k(P,Q)$, we have simply $|\mathbf{m}|=|\mathcal{R}|$ and $\mathbf{k}\cdot\mathbf{m}=|P|$. We say $\mathbf{m}'<\mathbf{m}$ if for all $k'$ holds $m'_{k'}\leq m_{k'}$ and at least for one $k'$ one has strictly $m'_{k'}< m_{k'}$.

\begin{theorem}
\label{thm:thm_corr_decomposition}
For any state $\ket{v}\in \GG_k$, the decomposition holds:
\begin{equation}
\frac{v(G\cup Q \bs P )}{v(G)}=\sum_{\substack{\mathcal{R}\in\mathrm{Part}_k(P,Q)}}\nu(\mathbf{m}(\mathcal{R}))~\sigma(\mathcal{R})\prod_{(P',Q')\in\mathcal{R}} v^{(c)}_{{P'}, {Q'}}.
\label{eq:thm_corr_decomposition}
\end{equation}
Here the function $\nu(\mathbf{m})$ is equal to 1 if $|\mathbf{m}|=0$ or $|\mathbf{m}|=1$, and otherwise is defined by the recursive relation
\begin{align}
\nu(\mathbf{m})=\sum_{\mathbf{m}'<\mathbf{m}}   (-1)^{\mathbf{k}\cdot(\mathbf{m}-\mathbf{m}')+1} \frac{\mathbf{k}\cdot \mathbf{m}'}{\mathbf{k}\cdot \mathbf{m}} ~\nu(\mathbf{m}')~\nu(\mathbf{m}-\mathbf{m}')~\begin{pmatrix}m_1' \\ m_1\end{pmatrix}
\cdot ..\cdot 
\begin{pmatrix}m_k' \\ m_k \end{pmatrix}.
\label{eq:nu_def}
\end{align}
\end{theorem}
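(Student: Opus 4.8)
The plan is to establish Eq.\,\ref{eq:thm_corr_decomposition} by induction on $N=|P|=|Q|$, treating the defining relation for the connected amplitudes (Eq.\,\ref{eq:v_c_def}) as a single unfolding step. Writing $v_{P,Q}\defeq v(G\cup Q\bs P)/v(G)$ and solving Eq.\,\ref{eq:v_c_def} for $v_{P,Q}$ gives
\begin{align}
v_{P,Q}=v^{(c)}_{P,Q}+\sum_{\substack{P'\subsetneq P,\,Q'\subsetneq Q}} v_{P',Q'}\,v_{\bar{P}',\bar{Q}'}\,(-1)^{|\bar{P}'|+1}\,\frac{|P'|}{|P|}\,\sigma(\bar{P}',P')\,\sigma(Q',\bar{Q}'),
\label{eq:plan_recursion}
\end{align}
where the constraint $|P'|=|Q'|$ is implicit, since amplitudes with the wrong particle number vanish. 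The base case $N=1$ is immediate: there $v_{P,Q}=v^{(c)}_{P,Q}$ and the only element of $\mathrm{Part}_k(P,Q)$ is the single block $\{(P,Q)\}$, with $\nu=1$ and $\sigma(\mathcal{R})=1$. For $N>k$ the term $v^{(c)}_{P,Q}$ vanishes by Theorem\,\ref{thm:gen_wick_iter}, while for $N\le k$ it supplies exactly the single-block partition (whose block has size $N\le k$, so it lies in $\mathrm{Part}_k$); in either case the induction hypothesis applied to the strictly smaller factors $v_{P',Q'}$ and $v_{\bar{P}',\bar{Q}'}$ produces only partitions into blocks of size $\le k$, so every $\mathcal{R}$ that appears lies in $\mathrm{Part}_k(P,Q)$.

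For the inductive step I would substitute the induction hypothesis into each factor of Eq.\,\ref{eq:plan_recursion}, expanding $v_{P',Q'}$ over $\mathcal{R}_1\in\mathrm{Part}_k(P',Q')$ and $v_{\bar{P}',\bar{Q}'}$ over $\mathcal{R}_2\in\mathrm{Part}_k(\bar{P}',\bar{Q}')$. The union $\mathcal{R}=\mathcal{R}_1\cup\mathcal{R}_2$ is a partition in $\mathrm{Part}_k(P,Q)$ and the product of connected-amplitude monomials is exactly $\prod_{(P'',Q'')\in\mathcal{R}}v^{(c)}_{P'',Q''}$, so I would then regroup the resulting triple sum by the combined partition $\mathcal{R}$. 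The combinatorics is a multiplicity count: for fixed $\mathcal{R}$ with block-size vector $\mathbf{m}$, the ordered splits reproducing it with $\mathbf{m}(\mathcal{R}_1)=\mathbf{m}'$ are in bijection with the two-colourings of the blocks of $\mathcal{R}$ choosing which $m'_{k'}$ of the $m_{k'}$ size-$k'$ blocks join the first factor, hence there are $\prod_{k'}\binom{m_{k'}}{m'_{k'}}$ of them (these being the binomial factors of Eq.\,\ref{eq:nu_def}). Each such split carries the scalar $(-1)^{|\bar{P}'|+1}\tfrac{|P'|}{|P|}=(-1)^{\mathbf{k}\cdot(\mathbf{m}-\mathbf{m}')+1}\tfrac{\mathbf{k}\cdot\mathbf{m}'}{\mathbf{k}\cdot\mathbf{m}}$ from Eq.\,\ref{eq:plan_recursion} together with $\nu(\mathbf{m}')\,\nu(\mathbf{m}-\mathbf{m}')$ from the induction hypothesis.

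The step I expect to be the main obstacle is showing that all the signs assemble uniformly into $\sigma(\mathcal{R})$, so that the coefficient of each monomial depends only on $\mathbf{m}$. Concretely I would prove
\begin{align}
\sigma(\mathcal{R}_1)\,\sigma(\mathcal{R}_2)\,\sigma(\bar{P}',P')\,\sigma(Q',\bar{Q}')=\sigma(\mathcal{R})
\label{eq:plan_sign}
\end{align}
for every split, by ordering the blocks of $\mathcal{R}$ so that those of $\mathcal{R}_1$ precede those of $\mathcal{R}_2$ and factorizing the reversed-$P$ and forward-$Q$ concatenations in Eq.\,\ref{eq:sigma_R_def} via the rules Eqs.\,\ref{eq:sigma_rule_1}-\ref{eq:sigma_rule_2}; the between-piece factors $\sigma(\bar{P}',P')$ and $\sigma(Q',\bar{Q}')$ are precisely what stitches the two sub-orderings into the full one. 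Two observations keep this clean: $\sigma(\mathcal{R})$ is independent of the chosen block ordering, since interchanging two blocks contributes $(-1)^{|P''_a|\,|P''_b|+|Q''_a|\,|Q''_b|}=1$ by the size-matching $|P''_a|=|Q''_a|$; and the same matching cancels all cross terms so that no residual sign survives. Granting Eq.\,\ref{eq:plan_sign}, the total coefficient of $\sigma(\mathcal{R})\prod_{(P'',Q'')\in\mathcal{R}}v^{(c)}_{P'',Q''}$ becomes
\begin{align}
\sum_{\mathbf{m}'<\mathbf{m}}(-1)^{\mathbf{k}\cdot(\mathbf{m}-\mathbf{m}')+1}\,\frac{\mathbf{k}\cdot\mathbf{m}'}{\mathbf{k}\cdot\mathbf{m}}\,\nu(\mathbf{m}')\,\nu(\mathbf{m}-\mathbf{m}')\prod_{k'}\binom{m_{k'}}{m'_{k'}},
\end{align}
which is verbatim the defining recursion Eq.\,\ref{eq:nu_def} for $\nu(\mathbf{m})$ (the $\mathbf{m}'=0$ term being annihilated by the factor $\mathbf{k}\cdot\mathbf{m}'$, which also guarantees the recursion is well-founded). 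This closes the induction and yields Eq.\,\ref{eq:thm_corr_decomposition}.
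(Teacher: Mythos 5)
Your proposal is correct and follows essentially the same route as the paper's proof: induction on $|P|$, unfolding the defining relation for $v^{(c)}_{P,Q}$, regrouping the double expansion by the combined partition $\mathcal{R}=\mathcal{R}_1\cup\mathcal{R}_2$, establishing the sign identity $\sigma(\mathcal{R}_1)\sigma(\mathcal{R}_2)\sigma(\bar{P}',P')\sigma(Q',\bar{Q}')=\sigma(\mathcal{R})$, and counting splits of a fixed $\mathcal{R}$ to recover the binomial factors in the recursion for $\nu(\mathbf{m})$. The only (cosmetic) difference is that the paper first proves the unconditional identity over all of $\mathrm{Part}(P,Q)$ and then restricts to $\mathrm{Part}_k$ using $v^{(c)}_{P',Q'}=0$ for $|P'|>k$, whereas you fold that restriction directly into the induction.
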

\begin{proof}
We will show Eq.\,\ref{eq:thm_corr_decomposition} for $\ket{v}\in \GG_k$ by first proving a general statement for all $\ket{v}\in\mathcal{H}$:
\begin{equation}
\frac{v(G\cup Q \bs P )}{v(G)}=\sum_{\substack{\mathcal{R}\in\mathrm{Part}(P,Q)}}\nu(\mathbf{m}(\mathcal{R}))~\sigma(\mathcal{R})\prod_{(P',Q')\in\mathcal{R}} v^{(c)}_{{P'}, {Q'}}.
\label{eq:corr_decomposition_gen}
\end{equation}
Note that coefficients $v^{(c)}_{{P'}, {Q'}}$ are zero for $|P'|,|Q'|>k$ if $\ket{v}\in \GG_k$. Since the contributions of $\mathcal{R}\in \mathrm{Part}(P,Q)\bs \mathrm{Part}_k(P,Q)$ to Eq.\,\ref{eq:corr_decomposition_gen} are proportional to at least one such coefficient, these contributions vanish; therefore, Eq.\,\ref{eq:corr_decomposition_gen} yields Eq.\,\ref{eq:thm_corr_decomposition}.

We now prove Eq.\,\ref{eq:corr_decomposition_gen} by induction in $|P|$. For the base of induction $|P|=|Q|=1$, by definition we have $\frac{v(G\cup Q \bs P )}{v(G)}=v^{(c)}_{{P}, {Q}}$. Since $\rm{Part}(P,Q)$ in this case only consists of the trivial $\mathcal{R}=\{(P,Q)\}$, for which $\sigma(\mathcal{R})=1$ and $\nu(\mathbf{m}(\mathcal{R}))=1$, we recover Eq.\,\ref{eq:corr_decomposition_gen} directly. We now assume the validity of Eq.\,\ref{eq:corr_decomposition_gen} for $|P|=|Q|=k_{\mathrm{ind}}$ and prove it for $|P|=|Q|=k_{\mathrm{ind}}+1$. By definition in Eq.\,\ref{eq:v_c_def} and induction step, we have (denoting $\bar{Q}'=Q\bs Q'$ and $\bar{P}'=P\bs P'$):
\begin{align}\notag
\frac{v(G\cup Q \bs P )}{v(G)}=v^{(c)}_{{P}, {Q}}~+
&
\sum_{\substack{P'\subsetneq P,~ Q'\subsetneq Q\\
|P'|=|Q'|}} \sum_{\substack{\mathcal{R}'\in\mathrm{Part}(P',Q')\\
\bar{\mathcal{R}}'\in\mathrm{Part}(\bar{P}',\bar{Q}')}}
\left(\prod_{(P'',Q'')\in\mathcal{R}'} 
v^{(c)}_{{P'}, {Q'}}\prod_{(\bar{P}'',\bar{Q}'')\in\bar{\mathcal{R}}'} v^{(c)}_{{\bar{P}'}, {\bar{Q}'}}\right)\\&
\times (-1)^{|\bar{P}'|+1}\,\frac{|P'|}{|P|}~{\nu(\mathbf{m}(\mathcal{R}'))}{\nu(\mathbf{m}(\bar{\mathcal{R}}'))}~\sigma(\mathcal{R}')\sigma(\bar{\mathcal{R}}')\sigma(\bar{P}',P')\sigma(\bar{Q}',Q')
\label{eq:R_induction}
\end{align}

Using the property of the sign function $\sigma(A)\sigma(B)\sigma(A, B)=\sigma(A\cup B)$ and the definition of $\sigma(\mathcal{R})$ (Eq.\,\ref{eq:sigma_R_def}), one finds a simplification $\sigma(\mathcal{R}')\sigma(\bar{\mathcal{R}}')\sigma(\bar{P}',P')\sigma(Q', \bar{Q'})=\sigma(\mathcal{R}'\cup \bar{\mathcal{R}}')$. 

Let us show that expression Eq.\,\ref{eq:R_induction} reproduces the desired sum in Eq.\,\ref{eq:corr_decomposition_gen}. The term for $\mathcal{R}=\{(P,Q)\}$ of this sum is given directly by $v^{(c)}_{{P}, {Q}}$ in Eq.\,\ref{eq:R_induction}, since for such $\mathcal{R}$ the coefficient ${\nu(\mathbf{m}(\mathcal{R}))}~\sigma(\mathcal{R})$ is equal to $1$. To reproduce the rest of the sum in Eq.\,\ref{eq:corr_decomposition_gen}, consider the products $\left(\sigma(\mathcal{R}'\cup \bar{\mathcal{R}}')\prod_{(P'',Q'')\in\mathcal{R}'} 
v^{(c)}_{{P''}, {Q''}}\prod_{(\bar{P}'',\bar{Q}'')\in\bar{\mathcal{R}}''} v^{(c)}_{{\bar{P}''}, {\bar{Q}''}}\right)$ in Eq.\,\ref{eq:R_induction}. These have the form $\sigma(\mathcal{R})\prod_{({P}'',{Q}'')\in\mathcal{R}} v^{(c)}_{{{P}''}, {{Q}''}}$ for $\mathcal{R}\equiv\mathcal{R}'\cup \bar{\mathcal{R}}'$. Such $\mathcal{R}$ is a nontrivial ($\mathcal{R}\neq\{(P,Q)\}$) partition of $(P,Q)$. Let us now determine the coefficient in front of such a product for any $\mathcal{R}$, examining the terms in Eq.\,\ref{eq:R_induction} coming from all pairs $\mathcal{R}',~\bar{\mathcal{R}}'$ which yield $\mathcal{R}'\cup\bar{\mathcal{R}}'=\mathcal{R}$. We observe that for any nontrivial $\mathcal{R}$ every possible splitting into $\mathcal{R}',~\bar{\mathcal{R}}'$ appears in the sum in Eq.\,\ref{eq:R_induction} exactly once --- in the component of the sum where $P'=P(\mathcal{R}')$, $Q'=Q(\mathcal{R}')$. Collecting the factor in front of $\sigma(\mathcal{R}) \prod_{(P'',Q'')\in\mathcal{R}}  v^{(c)}_{{{P}''}, {{Q}''}}$ in Eq.\,\ref{eq:R_induction}, we find
\begin{align}\notag
&\sum_{\mathcal{R}'\subsetneq\mathcal{R}}(-1)^{|P(\mathcal{R}')|-|P(\mathcal{R})|+1}\dfrac{|P(\mathcal{R}')|}{|P(\mathcal{R})|} {\nu(\mathbf{m}(\mathcal{R}'))}{\nu(\mathbf{m}(\mathcal{R}\bs {\mathcal{R}}'))}\\
\notag
=&\sum_{\mathbf{m}'<\mathbf{m}(\mathcal{R})}(-1)^{\mathbf{k}\cdot(\mathbf{m}(\mathcal{R})-\mathbf{m}')+1}\dfrac{\mathbf{k}\cdot\mathbf{m}'}{\mathbf{k}\cdot\mathbf{m}(\mathcal{R})} {\nu(\mathbf{m}')}{\nu(\mathbf{m}(\mathcal{R})-\mathbf{m}')}\cdot 
\begin{pmatrix}m_1' \\ m_1(\mathcal{R})\end{pmatrix}
\cdot ..\cdot 
\begin{pmatrix}m_k' \\ m_k(\mathcal{R})\end{pmatrix}\\
=& {\nu(\mathbf{m}(\mathcal{R}))}
\end{align}
Thus restoring the coefficient in Eq.\,\ref{eq:corr_decomposition_gen}, we show that the sum in Eq.\,\ref{eq:R_induction} gives the part of the sum over $\mathcal{R}$ in Eq.\,\ref{eq:corr_decomposition_gen} over nontrivial $\mathcal{R}\neq \{(P,Q)\}$. 

\end{proof}

\section{Non-Slater ansatz derivation}
\label{app:ansatz}

In this section we prove that Theorem\,\ref{thm:thm_corr_decomposition} directly implies an ansatz for states $\ket{v}\in \GG_k$:

\begin{theorem}
Any state $\ket{v}\in \GG_k$ can be represented as:
\begin{align}
\label{eq:hyperferm_ansatz} \ket{v}&=v(G)\, F(T_1,..T_k)\ket{G},\\
\label{eq:CC_gens} T_{k'}&=\sum_{\substack{P\subset G, Q\subset [l]\bs{G},\\ |P|=|Q|=k'}}  \theta_{P,Q} \Psi^\dag_{Q} \Psi_{P}.
\end{align}
for parameters $\theta_{P,Q}$ chosen as $v^{(c)}_{P,Q} \, \sigma(G\bs P, P)$ ($v^{(c)}_{P,Q}$ defined in Appendix\,\ref{app:cumulant_exp}). The function $F(x_1,..x_k)$ is defined as
\begin{equation}
    \label{eq:F_func_form}
    F(x_1, x_2,...) = \sqrt{1+ 2(x_2+x_4+..)}~\exp \left( \int\limits_0^{1} \dfrac{x_1+3x_3\mu^2+5x_5 \mu^4+..}{1+ 2(x_2 \mu^2 +x_4 \mu^4 +..)} d\mu \right).
\end{equation}

\label{thm:ansatz}
\end{theorem}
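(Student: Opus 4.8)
The plan is to expand the right-hand side of Eq.\,\ref{eq:hyperferm_ansatz} in the occupation basis and match the resulting amplitudes against the closed form of Theorem\,\ref{thm:thm_corr_decomposition}. First I would record that the generators $T_{k'}$ of Eq.\,\ref{eq:CC_gens} mutually commute and are nilpotent: the single-mode blocks $\psi^\dag_q\psi_p$ with $p\in G$, $q\in[l]\bs G$ commute (a one-line anticommutator check, using that the mode ranges $G$ and $[l]\bs G$ are disjoint), and each $T_{k'}$ strictly raises the number of particles outside $G$, which is bounded. Hence $F(T_1,\ldots,T_k)$ is an unambiguous polynomial in the $T_{k'}$, and
\[
F(T_1,\ldots,T_k)\ket{G} = \sum_{\mathbf m} f_{\mathbf m}\,\prod_{k'=1}^{k}T_{k'}^{m_{k'}}\ket{G},
\]
where the $f_{\mathbf m}$ are the Taylor coefficients of $F$ at the origin.

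Next I would carry out the combinatorial reduction. Acting with $\prod_{k'}T_{k'}^{m_{k'}}$ on $\ket{G}$ and projecting onto $\ket{G\cup Q\bs P}$ annihilates every term except those in which the pieces $(P'_a,Q'_a)$ are mutually disjoint with $\bigsqcup_a P'_a=P$ and $\bigsqcup_a Q'_a=Q$, i.e.\ ordered partitions of $(P,Q)$ of profile $\mathbf m$. Because the generators commute, each unordered partition $\mathcal R\in\mathrm{Part}_k(P,Q)$ with $\mathbf m(\mathcal R)=\mathbf m$ arises from exactly $\prod_{k'}m_{k'}!$ orderings, all yielding the same operator value. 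Collecting these, the ansatz assigns to $\ket{G\cup Q\bs P}$ the amplitude $v(G)\sum_{\mathcal R} f_{\mathbf m(\mathcal R)}\big(\prod_{k'}m_{k'}!\big)\,\epsilon(\mathcal R)\prod_{(P',Q')\in\mathcal R}\theta_{P',Q'}$, where $\epsilon(\mathcal R)$ is the fermionic sign of $\prod_a\Psi^\dag_{Q'_a}\Psi_{P'_a}\ket{G}$. Substituting $\theta_{P,Q}=v^{(c)}_{P,Q}\,\sigma(G\bs P,P)$ and comparing with Theorem\,\ref{thm:thm_corr_decomposition} splits the claim into two independent identities: a sign identity $\epsilon(\mathcal R)\prod_{(P',Q')\in\mathcal R}\sigma(G\bs P',P')=\sigma(\mathcal R)$, with $\sigma(\mathcal R)$ as in Eq.\,\ref{eq:sigma_R_def}, and a scalar identity $f_{\mathbf m}\prod_{k'}m_{k'}!=\nu(\mathbf m)$.

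The scalar identity I would handle by the generating-function method. Writing $\tilde F(x_1,x_2,x_3,\ldots)=F(-x_1,x_2,-x_3,\ldots)$ for the sign flip of the odd variables and $\mathcal E=\sum_{k'}k'x_{k'}\partial_{x_{k'}}$ for the Euler (weight) operator, I claim that the recursion Eq.\,\ref{eq:nu_def}, together with the base values $\nu=1$ for $|\mathbf m|\le 1$, is equivalent to the functional equation
\[
(\mathcal E F)\,\tilde F=\sum_{k'\ge 1}k'\,x_{k'},\qquad F(0)=1 .
\]
Indeed, dividing Eq.\,\ref{eq:nu_def} by $\prod_{k'}m_{k'}!$ and using $\binom{m_{k'}}{m'_{k'}}/m_{k'}! = 1/[m'_{k'}!\,(m_{k'}-m'_{k'})!]$ converts the $\nu$-recursion into a quadratic convolution for $f_{\mathbf m}$ graded by the weight $\mathbf k\cdot\mathbf m$; reading this convolution as the coefficient of $\prod_{k'} x_{k'}^{m_{k'}}$ in the product $(\mathcal E F)\tilde F$ gives exactly the displayed equation, with the cases $|\mathbf m|\le 1$ producing the inhomogeneous right-hand side. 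To finish I would verify that the explicit $F$ of Eq.\,\ref{eq:F_func_form} solves it: setting $\Phi(s)=F(x_1 s,x_2 s^2,\ldots)$ and substituting $\lambda=s\mu$ in the integral gives $\Phi(s)=\sqrt{M(s)}\exp\!\big(\int_0^s N(\lambda)/M(\lambda)\,d\lambda\big)$ with $N(\mu)=\sum_{j\ge0}(2j+1)x_{2j+1}\mu^{2j}$ and $M(\mu)=1+2\sum_{j\ge1}x_{2j}\mu^{2j}$, so that $\Phi'(s)=[\tfrac12 M'(s)+N(s)]\,\Phi(s)/M(s)$. Since $\mathcal E F=\Phi'(1)$, while $\tilde I=-I$ gives $F\tilde F=M(1)$ and hence $\tilde F=M(1)/F$, one obtains $(\mathcal E F)\tilde F=\tfrac12 M'(1)+N(1)=\sum_{k'\ge1}k'x_{k'}$, matching the functional equation and the linear term $\sum_{k'}x_{k'}$.

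I expect the main obstacle to be the sign identity $\epsilon(\mathcal R)\prod_{(P',Q')\in\mathcal R}\sigma(G\bs P',P')=\sigma(\mathcal R)$: it is the step where all the fermionic reordering must be tracked, and it is precisely what dictates the choice $\theta_{P,Q}=v^{(c)}_{P,Q}\,\sigma(G\bs P,P)$ --- the factor $\sigma(G\bs P,P)$ cancels the sign generated when $\Psi_{P'}$ annihilates modes of $G$, leaving only the inter-piece reordering signs, which assemble into $\sigma(\mathcal R)$ through the rules Eqs.\,\ref{eq:sigma_rule_1}--\ref{eq:sigma_rule_4}. Everything else is either a short algebraic check or the self-contained analytic verification above.
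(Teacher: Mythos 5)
Your proposal is correct and takes essentially the same route as the paper: expand $F(T_1,\ldots,T_k)\ket{G}$ over partitions using commutativity and nilpotency of the $T_{k'}$, reduce the match with Theorem~\ref{thm:thm_corr_decomposition} to a sign identity plus the coefficient identity $f_{\mathbf m}\prod_{k'}m_{k'}!=\nu(\mathbf m)$, and encode the latter in the functional equation $(\mathcal{E}F)\,\tilde F=\sum_{k'}k'x_{k'}$. The only differences are cosmetic --- you \emph{verify} the closed form of $F$ against that equation via the same $\lambda=s\mu$ rescaling and the relation $\Phi(s)\Phi(-s)=M(s)$, where the paper \emph{derives} $F$ by solving the resulting ODE, and the sign identity you flag as the main remaining obstacle is precisely the inductive lemma the paper proves as Eq.~\ref{eq:G_P_chain}.
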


\begin{proof}
The proof consists of two parts. First we show from Theorem\,\ref{thm:thm_corr_decomposition} that for $\ket{v}\in \GG_k$ one has
\begin{align}
\ket{v}=v(G)\sum_{\mathbf{m}} \frac{\nu(\mathbf{m})}{m_1!m_2!..m_k!} T_1^{m_1}T_2^{m_2}..T_k^{m_k}\ket{G}
\label{eq:hyperferm_ansatz_Taylor}
\end{align}
for coefficient $\nu(\mathbf{m})$ defined in Theorem\,\ref{thm:thm_corr_decomposition} and parameters in $T_{k'}$ chosen as $\theta_{P,Q}={v}^{(c)}_{P,Q} \sigma(G\bs P, P)$ for $v^{(c)}_{P,Q}$ defined in Appendix\,\ref{app:cumulant_exp}. Subsequently, we show that the generating function of $\nu(\mathbf{m})$, defined as
\begin{align}
F(x_1, .., x_k) = \sum_{\mathbf{m}} \frac{\nu(\mathbf{m})}{m_1!m_2!..m_k!} x_1^{m_1}..x_k^{m_k},
\label{eq:F_Taylor_def}
\end{align}
has the functional form of Eq.\,\ref{eq:F_func_form}. This will conclude the proof, showing that Eq.\,\ref{eq:hyperferm_ansatz_Taylor} reproduces Eq.\,\ref{eq:hyperferm_ansatz} from Theorem\,\ref{thm:ansatz}. Here we use the fact that $T_{k'}$ operators mutually commute, as all operators of form $\Psi^\dag_{Q} \Psi_{P}$ for $|Q|=|P|$ and $P\subset G$, $Q\subset [l]\bs G$ mutually commute. 

To the reader comparing Theorems \,\ref{thm:thm_corr_decomposition} and \ref{thm:ansatz}, the formulas in Eqs.\,\ref{eq:F_func_form} and \ref{eq:F_Taylor_def} might come as a surprise. There is an apparent contradiction: the function $F(x_1,..x_k)$ has an infinite Taylor series, while the sum over $\mathbf{m}$ in Theorem\,\ref{thm:thm_corr_decomposition} is finite. Indeed, $\mathbf{m}$ come from partitions of potentially large but bounded-size excitations. This confusion can be clarified by observing that the operators $T_k$ are nilpotent; this is because operators $\Psi^\dag_{Q} \Psi_{P}$ for $P\subset G$, $Q\subset [l]\setminus G$ square to zero.

Let us now show that Eq.\,\ref{eq:hyperferm_ansatz_Taylor} is indeed equivalent to the established Theorem\,\ref{thm:thm_corr_decomposition}. For this, we expand $T_{k'}$ in Eq.\,\ref{eq:hyperferm_ansatz_Taylor} noting the mutual commutation of $\Psi^\dag_{Q} \Psi_{P}$ operators. This yields
\begin{align}
\ket{v}=v(G)\sum_{\mathcal{R}} \nu(\mathbf{m}(\mathcal{R})) \left(\prod_{(P,Q)\in \mathcal{R}}\theta_{P,Q}\Psi^\dag_Q\Psi^{\phantom{\dagger}}_P\right)\ket{G}
\label{eq:hyperferm_ansatz_Taylor_2}
\end{align}
where the sum runs over all sets of type $\mathcal{R}=\{(P_1,Q_1), .. , (P_{|\mathcal{R}|},Q_{|\mathcal{R}|})\}$ for $|P_a|=|Q_a|\leq k$; function $\mathbf{m}(\mathcal{R})$ is defined as in Appendix\,\ref{app:cumulant_exp}. Factorials in the denominator are cancelled due to the term corresponding to $\mathcal{R}$ represented $m_1!..m_k!$ times.
We further observe: 
\begin{align}
\left(\prod_{(P,Q)\in \mathcal{R}}\Psi^\dag_Q\Psi^{\phantom{\dagger}}_P\right)&=
\Psi^\dag_{Q_1}\Psi^{\phantom{\dagger}}_{P_1}\Psi^\dag_{Q_2}\Psi^{\phantom{\dagger}}_{P_2}..\Psi^\dag_{Q_{|\mathcal{R}|}}\Psi^{\phantom{\dagger}}_{P_{|\mathcal{R}|}}\notag\\
&=\Psi^\dag_{Q_1}\Psi^\dag_{Q_2}..\Psi^\dag_{Q_{|\mathcal{R}|}}\Psi^{\phantom{\dagger}}_{P_{|\mathcal{R}|}}..\Psi^{\phantom{\dagger}}_{P_2}\Psi^{\phantom{\dagger}}_{P_1}\notag\\
&=\Psi^\dag_{\cup_a Q_a}\Psi^{\phantom{\dagger}}_{\cup_a P_a}~\sigma(Q_{|\mathcal{R}|}, ..Q_2,Q_1) ~\sigma(P_{|\mathcal{R}|}, ..P_2,P_1)
\end{align}
Substituting into Eq.\,\ref{eq:hyperferm_ansatz_Taylor_2} and observing $\Psi^\dag_{Q}\Psi^{\phantom{\dagger}}_{P}\ket{G}=\sigma(G\bs P, P) \ket{G\cup Q \bs P}$ yields
\begin{align}
\notag
\ket{v}=v(G)\sum_{\mathcal{R}} &~\nu(\mathbf{m}(\mathcal{R}))~\sigma(G\bs P(\mathcal{R}), P(\mathcal{R})) ~\sigma(Q_{|\mathcal{R}|}, ..Q_2,Q_1)~\sigma(P_{|\mathcal{R}|}, ..P_2,P_1)\\
&\times \left(\prod_{(P',Q')\in \mathcal{R}}\theta_{P',Q'}\right)\ket{G\cup Q(\mathcal{R}) \bs P(\mathcal{R}) }
\label{eq:hyperferm_ansatz_Taylor_3}
\end{align}
for $P(\mathcal{R})=\cup_a P_a,~Q(\mathcal{R})=\cup_a Q_a$ (as in Appendix\,\ref{app:cumulant_exp}). We now claim
\begin{align}
    \sigma(G\bs P(\mathcal{R}), P(\mathcal{R})) =\sigma(P_{|\mathcal{R}|},..P_{2}, P_1)~\sigma (P_1, P_2,..P_{|\mathcal{R}|}) ~ \prod_a \sigma(G\bs P_a, P_a)
    \label{eq:G_P_chain}
\end{align}
by induction. Indeed, for $|\mathcal{R}|=1$ this holds trivially. Induction step from $(|\mathcal{R}|-1)$ to $|\mathcal{R}|$ reads (we use properties from Eqs.\,\ref{eq:sigma_rule_1}-\ref{eq:sigma_rule_4} for transformations)
\begin{align}
    \notag \cup^{|\mathcal{R}|}_{a=1} \sigma(G\bs P_a, P_a)=&\sigma(G\bs P_{|\mathcal{R}|},P_{|\mathcal{R}|}) \sigma(G\bs (\cup^{|\mathcal{R}|-1}_{a=1} P_a), \cup^{|\mathcal{R}|-1}_{a=1} P_a) ~\\
    \notag & \times \sigma(P_{|\mathcal{R}|-1},..P_{2}, P_1)~\sigma (P_1, P_2,..P_{|\mathcal{R}|-1})\\
    \notag =&\sigma(G\bs P_{|\mathcal{R}|},P_{|\mathcal{R}|})~\sigma(G\bs (\cup^{|\mathcal{R}|}_{a=1} P_a), \cup^{|\mathcal{R}|-1}_{a=1} P_a)~\sigma(P_{|\mathcal{R}|}, \cup^{|\mathcal{R}|-1}_{a=1} P_a)\\
    \notag & \times \sigma(P_{|\mathcal{R}|-1},..P_{2}, P_1)~\sigma (P_1, P_2,..P_{|\mathcal{R}|-1})\\
    \notag =&\sigma(G\bs (\cup^{|\mathcal{R}|}_{a=1} P_a), \cup^{|\mathcal{R}|}_{a=1} P_a)~\sigma(\cup^{|\mathcal{R}|-1}_{a=1} P_a, P_{|\mathcal{R}|})\sigma(P_{|\mathcal{R}|}, \cup^{|\mathcal{R}|-1}_{a=1} P_a)\\
    \notag & \times \sigma(P_{|\mathcal{R}|-1},..P_{2}, P_1)~\sigma (P_1, P_2,..P_{|\mathcal{R}|-1})\\
    \notag =&\sigma(G\bs (\cup^{|\mathcal{R}|}_{a=1} P_a), \cup^{|\mathcal{R}|}_{a=1} P_a)~\sigma(P_{|\mathcal{R}|},..P_{2}, P_1)~\sigma (P_1, P_2,..P_{|\mathcal{R}|})
\end{align}
Employing Eq.\,\ref{eq:G_P_chain} and substituting $\theta_{P',Q'}=v^{(c)}_{P',Q'} \sigma(G\bs P', P')$ in Eq.\,\ref{eq:hyperferm_ansatz_Taylor_3}, we obtain
\begin{align}
\notag \ket{v}=v(G)\sum_{\mathcal{R}} &~\nu(\mathbf{m}(\mathcal{R}))~\sigma(Q_{|\mathcal{R}|}, ..,Q_2,Q_1)~\sigma(P_1,P_2,..,P_{|\mathcal{R}|})\\
&\times \left(\prod_{(P',Q')\in \mathcal{R}}v^{(c)}_{P',Q'}\right)\ket{G\cup Q(\mathcal{R}) \bs P(\mathcal{R}) }
\label{eq:hyperferm_ansatz_Taylor_4}
\end{align}
Grouping terms proportional to identical basis states $\ket{G\cup Q(\mathcal{R}) \bs P(\mathcal{R}) }$ yields
\begin{align}
\ket{v}=v(G)\sum_{P\subset G,Q\subset [l]\bs G}\ket{G\cup Q \bs P}\sum_{\mathcal{R}\in \mathrm{Part}_k(P,Q)} &~\nu(\mathbf{m}(\mathcal{R}))~\sigma(\mathcal{R})\left(\prod_{(P',Q')\in \mathcal{R}}v^{(c)}_{P',Q'}\right)
\label{eq:hyperferm_ansatz_Taylor_final}
\end{align}
with the set of partitions $\mathrm{Part}_k(P,Q)$ and sign $\sigma(\mathcal{R})$ defined in Appendix\,\ref{app:cumulant_exp}. Comparing Eq.\,\ref{eq:hyperferm_ansatz_Taylor_final} to Theorem\,\ref{thm:thm_corr_decomposition}, we observe that these statements are identical. As the transformations from Eq.\,\ref{eq:hyperferm_ansatz_Taylor} to Eq.\,\ref{eq:hyperferm_ansatz_Taylor_final} were all equivalences, Theorem\,\ref{thm:thm_corr_decomposition} implies validity of Eq.\,\ref{eq:hyperferm_ansatz_Taylor}. 

We now show that function $F(x_1,..,x_k)$ from Eq.\,\ref{eq:F_Taylor_def} satisfies the definition from Eq.\,\ref{eq:F_func_form}. By definition of $\nu(\mathbf{m})$ (see Theorem\,\ref{thm:thm_corr_decomposition}), $\nu(\mathbf{m})=1$ if $|\mathbf{m}|=1$ or $|\mathbf{m}|=0$ and for $|\mathbf{m}|>1$ is defined by the relation
\begin{align}
(\mathbf{k}\cdot \mathbf{m})~ \nu(\mathbf{m})=\sum_{\mathbf{m}'<\mathbf{m}}   (-1)^{\mathbf{k}\cdot(\mathbf{m}-\mathbf{m}')+1} {(\mathbf{k}\cdot \mathbf{m}')}{} ~\nu(\mathbf{m}')~\nu(\mathbf{m}-\mathbf{m}')~\begin{pmatrix}m_1' \\ m_1\end{pmatrix}
\cdot ..\cdot 
\begin{pmatrix}m_k' \\ m_k \end{pmatrix}.
\label{eq:nu_def_copy}
\end{align}
Multiplying both sides by $\frac{x_1^{m_1}..x_k^{m_k}}{m_1!..m_k!}$ and summing up the equations for all $|\mathbf{m}|>1$, we obtain
\begin{align}
\sum_{\mathbf{m}, |\mathbf{m}|>1} \frac{(\mathbf{k}\cdot \mathbf{m})~\nu(\mathbf{m}) }{m_1!..m_k!} x_1^{m_1}..x_k^{m_k}=\sum_{\mathbf{m}, |\mathbf{m}|>1}\sum_{\mathbf{m}'<\mathbf{m}}   (-1)^{\mathbf{k}\cdot(\mathbf{m}-\mathbf{m}')+1} {\mathbf{k}\cdot \mathbf{m}'}{} ~\frac{\nu(\mathbf{m}')}{m_1!..m_k!} ~\frac{\nu(\mathbf{m}-\mathbf{m}')~x_1^{m_1}..x_k^{m_k}}{(m_1-m'_1)!..(m_k-m'_k)!}.
\label{eq:gen_func_eq_1}
\end{align}
Changing the summation variables on the right hand side from $\mathbf{m}$ to $\mathbf{m}''=\mathbf{m}-\mathbf{m}'$ yields
\begin{align}
\sum_{\mathbf{m}, |\mathbf{m}|>1} \frac{(\mathbf{k}\cdot \mathbf{m})~\nu(\mathbf{m})}{m_1!..m_k!} x_1^{m_1}..x_k^{m_k}&=\sum_{\mathbf{m}'}  {}~\frac{(\mathbf{k}\cdot \mathbf{m}')~\nu(\mathbf{m}')}{m_1!..m_k!}x_1^{m_1'}..x_k^{m_k'}~\sum_{\mathbf{m}'', |\mathbf{m}''|>0}\frac{(-1)^{\mathbf{k}\cdot \mathbf{m}''+1}\nu(\mathbf{m}'')~}{m''_1!..m''_k!}x_1^{m_1''}..x_k^{m_k''},\notag\\
-(x_1+2x_2+..+kx_k)&=\sum_{\mathbf{m}'}  {}~\frac{(\mathbf{k}\cdot \mathbf{m}')~\nu(\mathbf{m}')}{m_1!..m_k!}x_1^{m_1'}..x_k^{m_k'}~\left(-1+\sum_{\mathbf{m}'', |\mathbf{m}''|>0}\frac{(-1)^{\mathbf{k}\cdot \mathbf{m}''+1}\nu(\mathbf{m}'')~}{m''_1!..m''_k!}x_1^{m_1''}..x_k^{m_k''}\right).\notag
\end{align}
In the second line we used the fact that $\nu(\mathbf{m})=1$ for $|\mathbf{m}|=1$. Flipping the sign on both sides and using $F(x_1,..,x_k)$ from Eq.\,\ref{eq:F_Taylor_def}, we obtain

\begin{align}
    \sum^k_{k'=1} k' x_{k'}=\left(\sum^k_{k'=1} k' x_{k'} \frac{\partial}{\partial x_{k'}}F(x_1,..x_k)\right) {F(-x_1,x_2,-x_3, .. (-1)^k x_k)} \label{eq:gen_func_eq_2}
\end{align}
We now solve for $F(x_1,..x_k)$ with initial condition $F(0,0,...) = 1$ (as $\nu (\mathbf{m})=1$ for $\mathbf{m}=\mathbf{0}$). Rewrite Eq.\,\ref{eq:gen_func_eq_2} as
\begin{equation}
\label{eq:generfunclambdaeq}
f(-\lambda) \dfrac{df(\lambda)}{d\lambda} = p'(\lambda)    
\end{equation}
where
\begin{equation}
    f(\lambda) = F(x_1\lambda, x_2\lambda^2,x_3\lambda^3,...), ~~~ p(\lambda) = \sum_{k'=1}^{k} \lambda^{k'} x_{k'}
\end{equation}
with the rescaling $x_{k'} \mapsto x_{k'} \lambda^{-{k'}}$. Decomposing $p$ into symmetric and anti-symmetric parts
\begin{equation}
    a(\lambda)  = x_{1} \lambda + x_{3} \lambda^{3} + ..., ~~~ s(\lambda) = \frac{1}{2} + x_2 \lambda^2 + x_4 \lambda^4 + ...
\end{equation}
and combining Eq.\,\ref{eq:generfunclambdaeq} with the one with $\lambda \mapsto -\lambda$, one obtains
\begin{equation}
    \dfrac{d}{d\lambda} (f(\lambda) f(-\lambda)) = 2 s'(\lambda)
\end{equation}
This gives $f(\lambda) f(-\lambda) = 2s(\lambda)$, which satisfies initial conditions at $\lambda = 0$. Now one can solve Eq.\,\ref{eq:generfunclambdaeq} by
\begin{equation}
    \dfrac{2 s}{f} \dfrac{df}{d\lambda} = (s' + a') ~~~ \Rightarrow ~~~ f(\lambda) = \exp \left( \int \dfrac{s' + a'}{2s} d\lambda \right) = \sqrt{2s(\lambda)}\exp \left( \int\limits_0^{\lambda} \dfrac{a'(\mu)}{2s(\mu)} d\mu \right)
\end{equation}
so the generating function is
\begin{equation}
    F(x_1, x_2,...) = \sqrt{2s(1)}\exp \left( \int\limits_0^{1} \dfrac{a'(\mu)}{2s(\mu)} d\mu \right).
\end{equation}
This coincides with Eq.\,\ref{eq:F_func_form}, as desired.

\end{proof}

\end{document}